\algrenewcommand\algorithmicrequire{\textbf{Input:}}
\algrenewcommand\algorithmicensure{\textbf{Output:}}
\begin{document}

\title{Differentially Private Set Representations}

\author{
  Sarvar Patel$^*$ \\
  Google \\
  \texttt{sarvar@google.com} \\
  \and 
  Giuseppe Persiano$^*$ \\
  Google and Universit\`a di Salerno \\
  \texttt{giuper@gmail.com} \\
  \and
  Joon Young Seo$^*$ \\
  Google \\
  \texttt{jyseo@google.com} \\
  \and
  Kevin Yeo$^*$ \\
  Google and Columbia University \\
  \texttt{kwlyeo@google.com} \\
}
\date{}

\maketitle

\footnotetext[1]{$^*$The authors are listed in alphabetical order.}

\theoremstyle{plain}
\newtheorem{theorem}{Theorem}[section]
\newtheorem{proposition}[theorem]{Proposition}
\newtheorem{lemma}[theorem]{Lemma}
\newtheorem{corollary}[theorem]{Corollary}
\theoremstyle{definition}
\newtheorem{definition}[theorem]{Definition}
\newtheorem{assumption}[theorem]{Assumption}
\theoremstyle{remark}
\newtheorem{remark}[theorem]{Remark}

\newcommand{\alg}{\mathcal{M}}
\newcommand{\prob}{\mathsf{Pr}}
\newcommand{\domain}{\mathsf{D}}
\newcommand{\range}{\mathsf{Range}}

\newcommand{\linsys}{\mathsf{LSBR}}
\newcommand{\sparseAlg}{\mathsf{DPSet}}
\newcommand{\sparsevec}{\sparseAlg}
\newcommand{\betterAlg}{\mathsf{BetterVec}}

\newcommand{\encode}{\mathsf{Encode}}
\newcommand{\decode}{\mathsf{Decode}}
\newcommand{\const}{\mathsf{Construct}}
\newcommand{\solve}{\mathsf{Solve}}
\newcommand{\row}{\mathsf{Row}}

\newcommand{\bv}{\mathbf{v}}

\newcommand{\bSp}{\mathbf{S'}}
\newcommand{\bL}{\mathbf{L}}
\newcommand{\bx}{\mathbf{x}}
\newcommand{\by}{\mathbf{y}}
\newcommand{\bX}{\mathbf{X}}
\newcommand{\bZ}{\mathbf{Z}}
\newcommand{\bM}{\mathbf{M}}
\newcommand{\bb}{\mathbf{b}}
\newcommand{\bOne}{\mathbf{1}}

\newcommand{\salt}{\mathsf{salt}}

\newcommand{\calH}{\mathcal{H}}
\newcommand{\calR}{\mathcal{R}}
\newcommand{\bR}{\mathbf{R}}
\newcommand{\E}{\mathbf{E}}

\begin{abstract}
We study the problem of differentially private (DP) mechanisms for representing
sets of size $k$ from a large universe.
Our first construction creates
$(\epsilon,\delta)$-DP representations with error probability of 
$1/(e^\epsilon + 1)$ using space at most $1.05 k \epsilon \cdot \log(e)$ bits where
the time to construct a representation is $O(k \log(1/\delta))$ while decoding time is $O(\log(1/\delta))$.
We also present a second algorithm for pure $\epsilon$-DP representations with the same error using space at most $k \epsilon \cdot \log(e)$ bits, but requiring large decoding times.
Our algorithms match our lower bounds on privacy-utility trade-offs (including constants but ignoring $\delta$ factors) and we also present a new space lower bound matching our constructions up to small constant factors.
To obtain our results, we design a new approach embedding sets into random linear systems
deviating from most prior approaches that inject noise into non-private solutions.
\end{abstract}

\section{Introduction}
Consider the problem of releasing a set $S$ of elements from a potentially very large universe $U$ in a differentially privately manner. The goal is to construct a differentially private representation of $S$, denoted by $\hat{S}$. The representation $\hat{S}$ can be used to try and determine whether an element $u \in U$ belongs to the original input set $S$.
$\hat{S}$ may err in two ways. For any $u \in S$, $\hat{S}$ may report a false negative stating that $u$ is not in $S$. Also, for $u \notin S$, $\hat{S}$ may report a false negative claiming $u$ appears in $S$.
Ideally, we should minimize the error probability for maximal utility while obtaining strong privacy for $S$.
This problem is useful for applications
where users wish to privately disclose information such as sets of bookmarked websites, visited IP addresses, installed mobile apps, etc.
One particularly important application is
training machine learning models using the above examples as feature vectors while maintaining user privacy.
As some concrete examples, good solutions to our problem could enable
privately training models for web traffic forecasting using user's visited webpages~\cite{web-traffic-time-series-forecasting}, app install predictions with user's installed apps sets~\cite{bhamidipati2017large} and detecting shared IP addresses from user's visited IP addresses~\cite{shared_ip}.

A naive approach is to interpret the universe $U$ as a bit vector where each element corresponds to a unique entry of the vector $\bv \in \{0,1\}^{|U|}$. Encoding $S \subseteq U$ works by setting the corresponding coordinates of $S$ to 1 and the rest to 0. Then, we can apply randomized response~\cite{randomizedResponse} to each entry of $\bv$. The noisy vector $\bv$ is then released as the encoding of $S$. Accessing an element proceeds by reading the value at the corresponding coordinate of the noisy vector $\bv$. With this approach, the encoding size scales linearly with $|U|$. In most practical applications, the universe $U$ is very large while the input set $S$ is quite small.
For example, one can consider $S$ to be set of visited websites. The universe $U$ will be the set of all websites that will be impractically large to store while $S$ will be a much smaller set.
Our goal is to design a construction whose size and encoding time depends only on the input set $S$ size while maintaining small error matching randomized response.

To tackle this problem, prior works started non-private solutions for efficiently representing sets such as Bloom filters~\cite{bloomFilter}.
First, the input set is encoded using the non-private solution.
Afterwards, each entry of the resulting representation is perturbed by injecting noise according to some distribution. This approach was studied in~\cite{blip} where they showed
the encoding was private. However, their work
lacked any analysis on the error probability beyond empirical evaluation. To our knowledge, no other work has studied DP representations of sets.

Another related line of work considers
DP mechanisms for releasing sparse histograms. In this problem, each element of the input set $S$ is also associated with some value. The goal of a query is to decode the value associated with the queried element (if it exists in the input set $S$).
Sparse histograms may also be interpreted as a sparse vector problem where the input vector has at most $k$ non-zero entries.
Unlike the private set problem, sparse histograms and vectors have been heavily studied. The majority of these works also take the same approach building on top of (potentially) non-private solution and injecting noise to the resulting representation. For example, several works study count sketch~\cite{huangPrivateCountSketch,paghPrivateCountSketch,zhaoPrivateCountSketch,sparseVecMean} and count min-sketch~\cite{melis2015efficient,ghazi2021power} where each entry of the resulting sketch is perturbed by some DP mechanism.
To our knowledge, the only work that slightly deviates from this approach is~\cite{alp}, but they still inject noise using randomized response on bit-level representations (and the Laplacian mechanism in certain settings).

One could attempt to use sparse histograms (or vectors)
to represents sets. We can associate each element in the input set $S$ with the value $1$. To decode, we could round the output decoding of the underlying sparse histogram algorithm to either $0$ or $1$. Unfortunately, the error probability guarantees are unclear directly using prior analysis. For example, many
prior works show that the per-entry error is at most $O(1/\epsilon)$.
That is, the true value and noise value differ by at most $O(1/\epsilon)$.
However, it is unclear how this can be directly translated into error probability. In particular, the exact constants of the per-entry error would need to be known to derive a probability bound of whether the decoded output is closer to $0$ or $1$.
As an example, the error probabilities would differ greatly if the per-entry error was at most $1/\epsilon$ as opposed to $100/\epsilon$ when using rounding.
In our work, we present constructions using a completely different approach to avoid this technical obstacle.
Our solutions obtain better per-entry error (both theoretically and empirically) than prior sparse histograms.

\subsection{Our results}

Our main contributions are efficient constructions
for differentially private representations of sets that
achieve optimal privacy-utility trade-offs and optimal space usage. In particular, our constructions exactly match the utility achieved by randomized response (even up to constants).
Our work deviates from prior approaches that aim to construct some representation and perturb using noise.
Instead, we embed the input set into
a random linear system.
Most elements in $S$ are guaranteed to satisfy their corresponding linear constraint in the linear system.
In contrast, all elements outside of the element set $S$ will be unlikely to satisfy the relevant constraint except with small probability (that is a controllable parameter in our algorithms).
Our constructions are inspired by retrieval data structures based on linear systems such as~\cite{porat,DS19,ribbon,okvs}. In particular, one can view our work as generalizations of these techniques for differential privacy.
We only consider error probabilities $\alpha < 1/2$. When error is $\alpha \ge 1/2$, the task is trivial. One can encode a random hash function (independent of
the set size)
that is perfectly private with $\epsilon = 0$ and $\delta = 0$ (see Appendix~\ref{app:naive}).
We present two constructions: one for each of approximate and pure differential privacy.

\begin{theorem}[Approximate-DP]
Let $S \subseteq U$ be a set of size $k$ from a universe
of size $n$. For any $\epsilon > 0$ and $\delta > 0$, there exists an $(\epsilon, \delta)$-DP algorithm for representing $S$
with error probability $\alpha = 1/(e^\epsilon + 1)$
and space of $1.05 k \epsilon \cdot \log(e)$ bits with three hash functions. The encoding time is $O(k \log (1/\delta))$ and the decoding time is $O(\log(1/\delta))$.
\end{theorem}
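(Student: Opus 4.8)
The plan is to realize concretely the ``embed the set into a random linear system'' idea, so that the released object is a \emph{solution vector} of a sparse linear system and is differentially private with no explicit noise added on top. Fix a small integer alphabet size $q$ close to $e^\epsilon+1=1/\alpha$ and a subsampling probability $\pi$ close to $1-e^{-\epsilon}$ (both calibrated below). Two of the hash functions, $h_1,h_2$, send each element $u$ to a sparse ribbon-style band row $\row(u)\in\mathbb{Z}_q^m$, and the third, $h_3$, sends $u$ to a target $z(u)\in\mathbb{Z}_q$; here $m$ depends only on the public bound $k$ on the set size. To encode $S$: using fresh private coins, keep each $u\in S$ independently with probability $\pi$, obtaining $S_{\mathrm{in}}$; solve the system $\{\langle\row(u),\bx\rangle=z(u):u\in S_{\mathrm{in}}\}$ over $\mathbb{Z}_q$, returning a uniformly random point of the affine solution space (again using private coins); and release $\bx$, packed densely into $\lceil m\log_2 q\rceil$ bits, or $\bot$ if the ribbon solver fails. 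Taking $m=\lceil 1.05\,\pi k\rceil$ plus a lower-order number of slack cells and ribbon band width $w=\Theta(\log(1/\delta))$ makes the solver fail with probability at most $\delta$, keeps the space below $1.05\,k\epsilon\log e$ bits, and yields encoding time $O(mw)=O(k\log(1/\delta))$.

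A membership query for $u$ recomputes $\row(u),z(u)$, reads the $O(w)=O(\log(1/\delta))$ cells of $\bx$ that $\row(u)$ touches, evaluates $\langle\row(u),\bx\rangle$, and reports ``$u\in S$'' by a threshold test (possibly using one extra public coin derived from $h_3(u)$). The utility structure is then clear: each $u\in S_{\mathrm{in}}$ satisfies its constraint by construction and is reported correctly; every element whose row is linearly independent of the solved rows --- which includes all $u\notin S$ and all $u\in S\setminus S_{\mathrm{in}}$ --- has $\langle\row(u),\bx\rangle$ uniform in $\mathbb{Z}_q$ and independent of $z(u)$, so it is reported as a member only with a small tunable probability. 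Writing the false-positive and false-negative rates as explicit functions of $\pi$, $q$ and the threshold, I would pick the parameters to force both to equal $\alpha=1/(e^\epsilon+1)$; the whole mechanism is a linear-algebraic lift of randomized response, with the subsampling factor $1-\pi$ playing the role of the flip probability on members.

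For privacy, fix an adjacent pair $S,S'=S\cup\{w\}$ (with $m$ padded to the public size bound, so it is unchanged) and condition on the public hash functions. On the ``good'' event that the solved system has full row rank, $\row(w)$ is not in its row space, and the solver succeeds, $\bx_S$ is a Bernoulli$(\pi)$-weighted mixture over subsamples $S_{\mathrm{in}}\subseteq S$ of the uniform law on the solution space $A_{S_{\mathrm{in}}}$, and a short calculation gives $\Pr[\bx_S=x]=(1-\pi)^{\,k-|G_x|}(1-\pi+\pi q)^{\,|G_x|}/q^m$ with $G_x=\{u\in S:\langle\row(u),x\rangle=z(u)\}$. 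Comparing with the same expression for $S'$, the pointwise likelihood ratio takes only the values $1-\pi+\pi q$, $1-\pi$, and their reciprocals, and the parameter choices make all of these lie in $[e^{-\epsilon},e^\epsilon]$. The bad event has probability at most $\delta$: solver failure is $\le\delta/2$ from the width choice, and the rank/independence degeneracies are $\le q^{-(m-k)}\le\delta/2$ after the slack cells, using that $S_{\mathrm{in}}$ is independent of the hashes so that no union bound over subsamples is needed. The standard ``charge the bad event to $\delta$'' argument then turns pure $\epsilon$-DP on the good event into $(\epsilon,\delta)$-DP.

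The step I expect to be the real obstacle is the \emph{exact} calibration of constants. The bare plan above forces $q=1/\alpha=e^\epsilon+1$ to satisfy simultaneously both directions of the privacy inequality and the utility identities, yet $e^\epsilon+1$ need not be an integer; reconciling this appears to require a subtler mechanism than plain subsampling over $\mathbb{Z}_q$ --- most likely an additional continuous randomization built into the encoder itself (which may also be the true source of the $\log(1/\delta)$ factor, through the precision it demands) rather than only into the query-time threshold --- and making the space, time, privacy and utility bounds all come out exactly as stated, not merely up to lower-order slack, is the delicate part. The remaining pieces --- the two utility identities, the likelihood-ratio computation, the bad-event bound via the quantitative random-band-matrix analysis of \cite{ribbon,DS19,okvs}, and the dense $q$-ary packing for the space bound --- I expect to be routine once the parameters are pinned down.
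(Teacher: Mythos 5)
Your plan is essentially the paper's construction: subsample each element with probability $\pi = 1-p$, embed the survivors' constraints $\row(u)\cdot\bx = h(u)$ into a random-band linear system, solve setting free variables uniformly at random, and decode by testing the constraint. The explicit mass formula $\Pr[\bx_S = x] = (1-\pi)^{k-|G_x|}(1-\pi+\pi q)^{|G_x|}/q^m$ (on the full-rank event) is correct and gives a cleaner route to privacy than the paper's proof, which conditions on whether the new element $w$ is removed and argues a one-extra-degree-of-freedom inequality in the harder direction; with $q = e^\epsilon+1$ and $1-\pi = e^{-\epsilon}$ both of your pointwise ratios come out to exactly $e^\epsilon$, matching the paper.

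One step as written would fail: on solver failure you output $\bot$. Because Definition~\ref{def:utility} averages over all of the encoder's randomness, including the failure event, a $\bot$ encoding forces the decoder to guess, pushing the unconditional error above $\alpha$ for every $\delta > 0$. The paper's fix is to return the raw set $S$ in the clear on failure (Algorithm~\ref{alg:subset_encode}), so the failure branch has zero utility error and the full privacy cost is absorbed into the additive $\delta$. Two smaller calibration notes: your worry that $e^\epsilon+1$ need not be a prime power is real, but the paper does not resolve it with a ``subtler'' continuous mechanism --- it simply rounds $|\mathbb{F}|$ up to the nearest prime power, giving error at most $\alpha$ at a slightly perturbed $(\epsilon,\alpha)$ operating point; and the $\log(1/\delta)$ in the time bound comes entirely from the band width $w = O(\log(1/\delta)+\log k)$ needed for full row rank of the random-band matrix, not from numerical precision. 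Finally, the paper's main theorem uses the worst-case $m=(1+\beta)k=1.05k$ rather than your $m\approx 1.05\,\pi k$; that optimization is deferred to Appendix~\ref{app:space_opt} because the Chernoff tail on $|S'|$ adds an extra $e^{-O(k)}$ to $\delta$.
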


\begin{theorem}[Pure-DP]
Let $S \subseteq U$ be a set of size $k$ from a universe
of size $n$. For any $\epsilon > 0$, there exists an $\epsilon$-DP algorithm for representing $S$
with error probability $\alpha = 1/(e^\epsilon + 1)$
and space of $k \epsilon \cdot \log(e)$ bits with one hash function. The encoding time is $O(k \log^2 k)$ and the decoding time is $O(k)$.
\end{theorem}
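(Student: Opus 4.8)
The plan is to realize $S$ as the solution of a random system of linear equations over a finite field $\mathbb{F}_q$ with $q$ essentially $e^\epsilon+1$, injecting exactly the right amount of noise into the right-hand side so that the induced distribution on the stored solution vector is simultaneously $\epsilon$-DP and has decoding error exactly $\alpha=1/(e^\epsilon+1)$. Concretely: fix a single public hash that assigns to every $u\in U$ a row $\row(u)\in\mathbb{F}_q^m$ (with $m\approx k\epsilon\log(e)/\log q$, so the final vector is $k\epsilon\log(e)$ bits) together with a ``default value'' $g(u)\in\mathbb{F}_q$. To encode $S$, for each $u\in S$ independently draw a target $t_u$ that equals $g(u)$ with probability $1-\alpha$ and is uniform on $\mathbb{F}_q\setminus\{g(u)\}$ with probability $\alpha$; then solve $\row(u)\cdot\bx=t_u$ for all $u\in S$ and output a uniformly random solution $\bx$. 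To decode $u$, recompute $\row(u)\cdot\bx$ and report $u\in S$ iff it equals $g(u)$. For the \emph{pure}-DP guarantee we cannot tolerate any probability of the system being unsolvable, so I would instantiate the row map algebraically (e.g.\ so that the rows come from polynomial evaluation and $\bx$ is a coefficient vector), which makes the system always consistent and full rank; this is also what yields the claimed $O(k\log^2 k)$ encoding (fast interpolation/structured solve), $O(k)$ decoding (one polynomial evaluation), and a single hash function.

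For utility I would show both error directions are \emph{exactly} $\alpha$. For $u\in S$ the equation $\row(u)\cdot\bx=t_u$ holds by construction, so $u$ is decoded correctly iff $t_u=g(u)$, which has probability exactly $1-\alpha$ by the sampling rule; hence the false-negative rate is exactly $\alpha$. For $u\notin S$ the key sublemma is that, over the choice of the hash, $\row(u)$ is (almost surely) not in the span of $\{\row(v):v\in S\}$ and is independent of the targets, so $\row(u)\cdot\bx$ is uniform on $\mathbb{F}_q$ and independent of $g(u)$; therefore it equals $g(u)$ with probability exactly $1/q=\alpha$, giving false-positive rate $\alpha$.

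The heart of the proof is privacy, and here the linear-algebraic viewpoint does the work. Fix the hash and a neighbouring pair, say $S'=S\cup\{u^\ast\}$, and consider any output $\bx$. When the relevant matrices have full row rank, the solutions compatible with a fixed target vector form a coset of fixed size $q^{\,m-|S|}$, and the target vector compatible with $\bx$ is forced to be $(\row(u)\cdot\bx)_{u\in S}$; hence $\prob[\encode(S)=\bx]=q^{-(m-|S|)}\prod_{u\in S}\prob[t_u=\row(u)\cdot\bx]$. Taking the ratio of this quantity for $S'$ and for $S$, all common factors cancel and one is left with exactly $q\cdot\prob[t_{u^\ast}=\row(u^\ast)\cdot\bx]$, which by the sampling rule is either $q(1-\alpha)$ or $q\alpha/(q-1)$; substituting $q=e^\epsilon+1$, $\alpha=1/(e^\epsilon+1)$ gives precisely $e^{\epsilon}$ and $e^{-\epsilon}$. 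So the ratio lies in $[e^{-\epsilon},e^{\epsilon}]$ for every output, i.e.\ pure $\epsilon$-DP.

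The main obstacle is making all three requirements coexist. Pure DP forbids any failure probability, so the row map must give an \emph{always}-solvable, \emph{always}-full-rank system for every set of size $\le k$ and every neighbour of such a set; simultaneously the solution vector must fit in $k\epsilon\log(e)$ bits (pinning $q$ near $e^\epsilon+1$ and forcing care when this is not a prime power, handled by rounding at the cost of the stated small constants), and the dimension bookkeeping in the privacy ratio (the single leftover factor of $q$) must be robust to $|S|$ changing by one. I would therefore fix the neighbour relation as add/remove --- a single changed ``coordinate,'' which is also why $\alpha=1/(e^\epsilon+1)$ rather than $1/(e^{\epsilon/2}+1)$ appears --- and pad $m$ to the size-$k$ value uniformly, so the $q^{-(m-|S|)}$ factors differ by exactly one power of $q$. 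Everything else, including the fast-interpolation running time and the final bit count, then follows by routine estimates.
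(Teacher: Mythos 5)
Your proposal is correct and lands on the same mechanism as the paper, but it reaches the privacy bound by a cleaner route. Where the paper keeps each $s\in S$ with probability $1-p=\frac{1-2\alpha}{1-\alpha}$, solves the linear system only over the surviving subset $S'$, and sets free variables uniformly, you instead impose a constraint for \emph{every} $s\in S$ with a locally randomized target $t_s$ (equal to $g(s)$ with probability $1-\alpha$, else uniform off $g(s)$). These two encodings are distributionally identical: the marginal of $\row(s)\cdot\bx$ in the paper's scheme is $(1-p)+p/q=1-\alpha$ on $h(s)$ and $p/q=\alpha^2/(1-\alpha)=\alpha/(q-1)$ on every other value, which is exactly your target distribution, and in both cases $\bx$ is uniform on the appropriate fiber. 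Your privacy argument is arguably slicker than the paper's: you write $\Pr[\encode(S)=\bx]=q^{-(m-|S|)}\prod_{u\in S}\Pr[t_u=\row(u)\cdot\bx]$ and take the ratio directly, whereas the paper conditions on the event that $u^*$ is removed and runs a two-sided case analysis (with one direction needing a sub-lemma about the extra free variable). Both hinge on the same algebraic fact (one extra row, one fewer degree of freedom, one factor of $q$), and both use the Vandermonde row map (with the Beame--Moshkovitz/BM-style fast solve) to guarantee $\delta=0$, $O(k\log^2 k)$ encoding and $O(k)$ decoding. One small inaccuracy to note: in your false-positive argument you invoke ``$\row(u)$ is almost surely not in the span of $\{\row(v):v\in S\}$,'' but for the Vandermonde row map $\row$ is deterministic and, when $|S|=m=k$, $\row(u)$ \emph{is} in that span (the rows span all of $\mathbb{F}_q^k$); the correct and sufficient justification there is simply that $g(u)$ is a fresh uniform value independent of $\bx$ (which only depends on $g|_S$), giving the $1/q$ collision probability — the paper uses exactly this, noting that the randomness of $h$ (your $g$) is needed for correctness only.
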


We can compare the error probabilities achieved by our DP set mechanisms
compared to prior works. For private histograms, per-entry expected error is $\Omega(1/\epsilon)$ as shown in~\cite{dpGeometry}.
In contrast, our constructions err with probability $1/(e^\epsilon + 1)$. Note, we can
convert this into the expected per-entry error
as $1/(e^\epsilon + 1)$. So, we
obtain exponentially smaller per-entry error of
$1/(e^\epsilon + 1)$, which is impossible for private histograms.
We also perform experimental evaluation in Section~\ref{sec:exp} to corroborate our error being exponentially smaller compared to private histograms.

\noindent{\bf Lower bounds.}
We show that our constructions achieve optimality in two
important dimensions: trade-offs between privacy and utility as well as privacy and space. First, we present a lower bound on the best possible trade-off between privacy and utility (that is, error probability).
Our pure-DP solution matches
this lower bound exactly including constants. Similarly, our approximate-DP algorithm matches the lower bound (including constants) if we ignore the $\delta$ factor.
We also present a lower bound showing the best possible trade-off between privacy and space (encoding size).

\begin{theorem}[Utility-privacy trade-off]
Let $S \subset U$ be a set of size $k$. For any $\epsilon \ge 0$ and $0 \le \delta \le 1$, any $(\epsilon, \delta)$-DP algorithm for representing $S$ must
have error probability
$\alpha \ge (1-\delta)/(e^\epsilon + 1)$.
\end{theorem}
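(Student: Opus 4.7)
The plan is a standard two-neighbor argument: I will fix an element, write the decoding event on that element as a test that distinguishes two neighboring sets, and then combine the utility guarantee with the $(\epsilon,\delta)$-DP inequality to force a lower bound on $\alpha$.

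Concretely, fix any universe element $u \in U$ and pick two neighboring sets $S_0, S_1$ differing only in whether they contain $u$, i.e., $u \notin S_0$ and $u \in S_1 = S_0 \cup \{u\}$ (one of the two has size $k$; the other has size $k\pm 1$, but both are admissible inputs and are neighbors under the standard add/remove relation for set-valued DP). Let $\alg$ denote the claimed $(\epsilon,\delta)$-DP representation mechanism and let $\decode(\cdot,u)$ be the query procedure on $u$. Define
\[
p_0 \;=\; \prob[\decode(\alg(S_0), u) = 1], \qquad p_1 \;=\; \prob[\decode(\alg(S_1), u) = 1].
\]
The utility assumption (error at most $\alpha$ on every query) gives a false-positive bound $p_0 \le \alpha$ and a false-negative bound $1 - p_1 \le \alpha$, i.e., $p_1 \ge 1 - \alpha$.

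Next I apply $(\epsilon,\delta)$-DP to the event ``$\decode$ outputs $1$ on query $u$,'' which is a post-processing of the released representation and therefore inherits the DP guarantee. This yields
\[
p_1 \;\le\; e^{\epsilon}\, p_0 + \delta \;\le\; e^{\epsilon}\,\alpha + \delta.
\]
Combining with $p_1 \ge 1 - \alpha$ gives $1 - \alpha \le e^{\epsilon}\alpha + \delta$, and rearranging produces the claimed bound $\alpha \ge (1-\delta)/(e^{\epsilon}+1)$.

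I do not expect a serious obstacle here; the argument is a direct instantiation of the standard DP hypothesis-testing inequality. The only points that require care are (i) confirming that querying a fixed $u$ is indeed a valid post-processing so that DP transfers to the binary decoding event, and (ii) ensuring the neighboring relation is the one implicit in the theorem statement so that the two sets $S_0, S_1$ actually count as neighbors (if the definition uses ``replace one element'' rather than ``add/remove,'' I would instead pick $S_0 = T \cup \{v\}$ and $S_1 = T \cup \{u\}$ for $v \neq u$ with $v \notin S_1$, and the same calculation goes through verbatim). Everything else is a one-line inequality.
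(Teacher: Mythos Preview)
Your proposal is correct and follows essentially the same argument as the paper's proof: fix an element, take two neighboring inputs that differ on that element, use the error bound to get $p_0 \le \alpha$ and $p_1 \ge 1-\alpha$, apply the $(\epsilon,\delta)$-DP inequality to the decoding event, and rearrange $1-\alpha \le e^\epsilon \alpha + \delta$. The paper phrases things in terms of the binary-vector encoding of sets rather than the sets themselves, but the logic is identical, and your remarks about post-processing and the neighboring relation line up with the paper's setup (which uses symmetric set difference, i.e., add/remove).
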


\begin{theorem}[Space-privacy trade-off]
Let $S \subset U$ be a set of size $k$. For any $\epsilon \ge 0$ and $0 \le \delta \le 1$, any $(\epsilon, \delta)$-DP algorithm for representing $S$ with error probability $0 < \alpha < 1/2$, the encoding bit size must be $\min(\Omega((1+\delta / e^\epsilon) \cdot k \cdot \log((1/\alpha) - 1)), \log \binom{n}{k})$.
\end{theorem}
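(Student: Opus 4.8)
The plan is to prove the bound by an information-theoretic encoding argument against a uniformly random input set, sharpened at the end using group privacy. View the encoding as a random variable $Z = \mathcal{M}(S)$ and the decoder as a map $u \mapsto \mathrm{dec}(Z,u)\in\{0,1\}$ that, absorbing its internal randomness into $Z$, is deterministic given $Z$; correctness says $\prob[\mathrm{dec}(Z,u)=1]\ge 1-\alpha$ for $u\in S$ and $\le\alpha$ for $u\notin S$. The bit size of the encoding is at least $H(Z)$ (for variable-length encodings one passes from $H(Z)$ to worst-case length in the standard way), so it suffices to lower bound $H(Z)$. First I would restrict attention to a sub-universe $U'\subseteq U$ of size $n' := \min\{\,n,\ \Theta(k\cdot((1/\alpha)-1))\,\}$ and to inputs $S\in\binom{U'}{k}$: any mechanism correct on $U$ is correct on $U'$. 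When the first term of the $\min$ binds we will only get $\Omega(\log\binom{n}{k})$; otherwise the choice $n'=\Theta(k((1/\alpha)-1))$ is the ``sweet spot'' that makes the total false-positive mass $\alpha n' = \Theta(k)$ comparable to $k$ itself, which is exactly what keeps the correction terms below lower order. This is the origin of the $\min$ and of the $\log\binom{n}{k}$ summand.

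The core step is that a correct encoding almost determines $S$. Let $\widehat S(Z):=\{u\in U':\mathrm{dec}(Z,u)=1\}$. Correctness gives $\mathbb{E}\,[\,|S\setminus\widehat S(Z)|\,]\le\alpha k$ and $\mathbb{E}\,[\,|\widehat S(Z)\setminus S|\,]\le\alpha n'=\Theta(k)$, so by Markov there is an event $G$ with $\prob[G]\ge\tfrac12$ on which $|\widehat S(Z)|=O(k)$ and $|S\triangle\widehat S(Z)|=O(k)$. On $G$, the input $S$ is reconstructed from $Z$ plus ``correction data'' naming which $O(\alpha n')=O(k)$ elements of $\widehat S(Z)$ to remove and which $\le\alpha k$ elements of $U'\setminus\widehat S(Z)$ to add; since $|\widehat S(Z)|=O(k)$ and $|U'|=O(k/\alpha)$ this costs at most $O(k)+O(\alpha k\log(1/\alpha))$ bits, which is $o(k\log((1/\alpha)-1))$ plus $O(k)$. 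Using the trivial bound $\log\binom{n'}{k}$ on $\neg G$ and $\log\binom{n'}{k}=k\log((1/\alpha)-1)+\Theta(k)$, this yields $H(S\mid Z)\le\tfrac12\log\binom{n'}{k}+O(k)$ and hence
\[
H(Z)\ \ge\ I(S;Z)\ =\ \log\tbinom{n'}{k}-H(S\mid Z)\ =\ \Omega\!\big(k\log((1/\alpha)-1)\big).
\]

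The argument so far is oblivious to privacy; to obtain the leading constant (a factor $1$ rather than $1/2$) and the refinement by $(1+\delta/e^\epsilon)$ I would bring in group privacy. For $u\in S$, $v\notin S$ the swap-neighbor $S_{u\to v}$ is at constant distance from $S$, so $\mathcal{M}(S)$ and $\mathcal{M}(S_{u\to v})$ are $(O(\epsilon),O(\delta))$-indistinguishable even though they answer the query $u$ oppositely; for a random $S$ this forces the law of $Z$ to be simultaneously spread across the $\approx k\,n'$ swap directions, contributing entropy that is not already charged to the identity of $S$. Turning this into the stated constant --- and, in particular, tracking how the additive $\delta$-slack of group privacy enters the log-odds $(1-\alpha)/\alpha=(1/\alpha)-1$ to produce the $(1+\delta/e^\epsilon)$ factor --- is the part I expect to be the main obstacle: one has to account for the ``noise entropy'' forced by DP without double-counting it against the ``which set'' entropy, and keep the estimate meaningful as $\alpha$ approaches $1/2$, where the target bound itself degrades to $o(k)$.
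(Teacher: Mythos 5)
Your core step---use the decoder's predictions together with a small amount of ``correction data'' to turn the DP encoding into a near-lossless encoding of the input set, then invoke a source-coding / entropy bound---is exactly the paper's idea, so the ``utility-only'' portion of your plan is on the right track. The paper's Lemma~\ref{lem:space_lb} carries this out without restricting to a sub-universe: it keeps all of $U$, encodes $(|\bZ|, \bZ, |X\cap A|, X\cap A, X\cap B)$ where $A$ and $B$ are the index sets the decoder labels $0$ and $1$, and applies Shannon's source-coding bound against $H(\bx)=\log\binom{n}{k}$; instead of a Markov/good-event split (which costs you the explicit factor $1/2$), it uses Jensen together with log-concavity of $\binom{a}{b}$ to bound the expected correction length, obtaining the sharper constant $(1-2\alpha)$ at the cost of only $2\log k + \log\log(en/k)$ lower-order bits. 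Your sub-universe trick is a legitimate alternative and plays the same role as the $\min$ with $\log\binom{n}{k}$, but it brings in $O(k)$-size correction terms that are not lower order when $\alpha$ is near $1/2$, whereas the paper's corrections are $o(k)$.

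The genuine gap is the $(1+\delta/e^\epsilon)$ factor, and here your route diverges from the paper's and is left unresolved. You propose a group-privacy argument about swap-neighbors $S_{u\to v}$ forcing extra ``noise entropy'' in $\bZ$, but you explicitly flag that you do not see how to make this rigorous without double-counting and call it ``the main obstacle.'' The paper never uses group privacy here at all. It instead exploits modularity: Lemma~\ref{lem:space_lb} is purely about utility (the constant is $(1-2\alpha)$ and involves no $\epsilon,\delta$), and the dependence on privacy enters only by composing with the separately proven privacy-utility trade-off (Theorem~\ref{thm:error_lb}), which forces $\alpha \ge (1-\delta)/(e^\epsilon+1)$; substituting this constraint turns the $(1-2\alpha)$ prefactor into $\tfrac{e^\epsilon-1+2\delta}{e^\epsilon+1}$, from which the stated $(1+\delta/e^\epsilon)$ form is read off. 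So you have reproduced the right skeleton for the main term, but the proposal does not prove the theorem as stated, and the mechanism you reach for to supply the missing factor is both different from the paper's and admittedly incomplete.
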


We can consider the space lower bound restricted to algorithms that obtain the optimal privacy-utility trade-off as well. Therefore, we can set $\alpha = (1-\delta)/(e^\epsilon + 1)$ into the above lower bound. 
Assuming standard values of very small $\delta$, we can see that the lower bound becomes
$\Omega(k \log(1/\alpha)) = \Omega(k \cdot \epsilon)$. 
Note that our constructions use space of $1.05k\epsilon \cdot \log(e)$ and $k \epsilon \cdot \log(e)$ bits respectively with error probability $\alpha = 1/(e^\epsilon + 1)$. In other words, the space usage
asymptotically matches our lower bound for all reasonable parameter choices of $\delta$. In our proof, we work out the exact constants
and show that the constant in the lower bound
approaches $\log(e)$ for larger values of $\epsilon$. In fact, we
show that both our constructions exactly match the lower
bound up to a very small constant of at most $4$ that only occurs
when $\epsilon = 0$.
Furthermore, we note our lower bounds also apply to probabilistic filters (such as Bloom filters) that could also emit false negative errors.

\subsection{Related work}

\noindent{\bf Private filters.} 
Bloom filter~\cite{bloomFilter} is a space efficient, probabilistic data structure that can be used to test whether an element is a member of a set.
\cite{blip} show that flipping each bit of a Bloom filter with probability $1/(1 + e^{\epsilon / t})$ is $\epsilon$-DP where $t$ is the number of hash functions. However, their work only experimentally evaluates the utility without any provable guarantees.
Additionally, we note that prior works have attempted to analyze the privacy properties of filter data structures without modification. For example, this has been studied for Bloom filters~\cite{bianchi2012better}, counting Bloom filters~\cite{reviriego2022privacy} as well as groups of multiple filter data structures~\cite{reviriego2023privacy}. In general, the conclusion is that filter data structures without modification fail to obtain reasonable privacy guarantees.
Finally, we note Bloom filters have also been used in other differential privacy
contexts such as RAPPOR~\cite{CCS:ErlPihKor14} where the
goal is to aggregate discrete value responses
from clients with local DP.

\noindent{\bf Private sparse histograms and vectors.}
A histogram is a frequency vector where each coordinate may take on real values. It is known that histograms can be made differentially private by adding Laplacian noise to each coordinate~\cite{dp}. The expected error of each entry is $O(1 / \epsilon)$ where $\epsilon$ is the privacy parameter, and it was shown that this privacy-utility trade-off is essentially optimal~\cite{dpGeometry,bassily2015local}.

Several works have considered the setting where the histogram is sparse and at most $k$ out of $d$ coordinates are non-zero. The goal is to release a representation of the histogram whose size does not depend on $d$. Compared to the Laplacian mechanism, earlier works either suffered from significantly worse privacy-utility trade-offs~\cite{korolova,cormode} or incurred very slow access time~\cite{balcer}. More recently, Aumuller {\em et al.}~\cite{alp} proposed an ALP mechanism that achieves expected error of $O(1 / \epsilon)$ (matching the lower bound asymptotically) with access time of $O(1 / \delta)$. The space usage is also very efficient, obtaining $O(k \log(d + u))$ bits where $u$ is the upper bound on the value of the entries. 

Another line of work considers private versions of
count sketch, introduced in~\cite{countSketch}, which can be viewed as a generalization of the Bloom filter. Each element in the set has an associated frequency, and the goal is to estimate the frequency of any element in the universe. Viewing the set as a sparse vector of frequencies, the basic idea of the count sketch is to transform the sparse vector $\bx \in \mathbb{N}^d$ to a lower dimensional vector via an affine transformation $\mathbf{A}\bx \in \mathbb{N}^D$, where $\mathbf{A}$ is a random matrix from a specific distribution. From $\mathbf{A}\bx$, each coordinate $\bx_i$ can be estimated with error that depends on $D$ and the norm of $\bx$. 
Several works~\cite{huangPrivateCountSketch,paghPrivateCountSketch,zhaoPrivateCountSketch,sparseVecMean} analyze the privacy-utility trade-off of the private count sketch with different noise distributions in the context of estimating the frequencies of the elements. 
Due to the linearity of count sketch, these works also studied the problem in the local model where the histogram is distributed amongst multiple parties.
These works consider a more general problem setting than ours.
As discussed earlier, it is not immediately obvious how the error guarantees of private count sketch will translate to our problem setting.

\section{Preliminaries}
\noindent{\bf Notation.}
Throughout our paper, we will use $\ln x$ to denote natural (base-$e$) logarithms and use $\log x$ to denote base-2 logarithms.
We denote $[x]$ as the set $\{1,\ldots,x\}$
for any integer $x \ge 0$.
We denote
all vectors in lower case boldface $\bx$
and matrices in capital case boldface $\bM$.
We denote $\bx[i]$ as the $i$-th entry of $\bx$.
Similarly, we denote $\bM[i][j]$ as the $j$-th entry of
the $i$-th row vector of $\bM$, $\bM[i]$ as the $i$th row vector, and $\bM[:][j]$ as the $j$th column vector. We denote $\bx[a:b]$ as the subvector of $\bx$ in range $[a, b]$.
We use $\bx^\intercal$ as the transpose of $\bx$.
We use $\mathbb{F}^n$ to denote the set of all column vectors of length $n$ over a field $\mathbb{F}$ and $\mathbb{F}^{n \times m}$ to denote the set of all $n$ by $m$ matrices over a field $\mathbb{F}$.
We use the notation $\bOne_{x \in S}$ such that
$\bOne_{x \in S} = 1$ if and only if $x \in S$ and $\bOne_{x \in S} = 0$ otherwise when $x \notin S$. Finally, given a countable set $S$, we will use $S_i$ to denote the $i$-th element in $S$ (in arbitrary order). The subscript is simply used as a label to distinguish the elements.

\noindent{\bf Differential privacy.} 
The notion of differential privacy (DP) was introduced by~\cite{dp}. DP algorithms guarantee that small changes to the input will not drastically change the
output probability distribution. In other words, two similar (or nearby) inputs will result in very similar
output distributions.

Throughout our work, our inputs will be sets $S$ from a universe $U$, $S \subseteq U$. We measure the distance between
two sets $S$ and $S'$ as the symmetric set difference that
we denote as $S \Delta S' = |S \setminus S'| + |S' \setminus S|$. This is the number of elements
that appear in exactly one of $S$ and $S'$. One can interpret the symmetric set difference as the minimum number of elements that need to be added or removed to obtain $S'$ from $S$ (or vice versa). We say that
two input sets are neighboring when their
symmetric set difference is one, that is, $S \Delta S' = 1$.
For convenience, we will denote the distance between
two sets $S$ and $S'$ as $|S - S'| = S \Delta S'$ to conform
with standard differential privacy notation.

We note that one can also interpret the above
using $\ell_1$ distances between vectors. For every
entry $u \in U$, we can denote with a unique
integer from the set $[|U|]$. Suppose, we use a function
$z: U \rightarrow [|U|]$ as this mapping. For any set $S \subseteq U$, we map $S$ to the vector $\bx_S \in \{0,1\}^{|U|}$
such that $\bx_S[i] = 1$ if and only if there exists
$u \in S$ such that $z(u) = i$.
With this interpretation, we note that the symmetric set difference between two sets $S$ and $S'$, $S \Delta S'$, is identical
to the $\ell_1$ distance between the corresponding
vectors defined as $|\bx_S - \bx_{S'}|_1 = \sum_{i \in [|U|]} |\bx_S[i] - \bx_{S'}[i]|$.

We present the definition of differential privacy
following standard definitions~\cite{dpBook}.

\begin{definition}
\label{def:dp}
A randomized algorithm $\alg$ with domain $\domain$ is $(\epsilon, \delta)$-differentially private if, for all $R \subseteq \range(\alg)$ and for all $x, y \in \domain$ such that $|x - y| = 1$, then
\[
\prob[\alg(x) \in R] \leq e^\epsilon \cdot \prob[\alg(y) \in R] + \delta
\]
over the randomness of the algorithm $\alg$.
\end{definition}

\noindent{\bf Differentially private set representations.}
We focus on differentially private
algorithms for releasing sets $S$ of size at most $\hat{k}$, that is, $S \subseteq U$ such that $|S| \leq \hat{k}$ for some input parameter $\hat{k}$. We will focus on the case where the universe $U$ is substantially larger than the input set $S$. 

\begin{definition}
An algorithm $\Pi = (\Pi.\encode, \Pi.\decode)$ for representing sets consists of:
\begin{itemize}
    \item $\hat{S} \leftarrow \encode(S)$: The (randomized) encoding takes set $S \subseteq U$ and returns encoding $\hat{S}$.
    \item $b \leftarrow \Pi.\decode(\hat{S}, u)$: The decoding takes encoding $\hat{S}$ and
    element $u \in U$ and outputs $b \in \{0,1\}$.
\end{itemize}
The construction (encoding) time is the running time of $\Pi.\encode$ and the access (decoding) time is the
running time of $\Pi.\decode$. The space is the size of encoding $\hat{S}$.
\end{definition}

In other words, an algorithm for releasing sets
creates an encoding $\hat{S}$ of a set $S \subseteq U$.
Furthermore, the algorithm enables checking whether
any element $u \in U$, appears
in $S$ using the encoding $\hat{S}$.

Next, we define the utility of the differentially private set problem through its error probability. 
An error occurs when the decoding algorithm for
a query $q \in U$ returns an answer that is inconsistent
with the original input set $S$.

\begin{definition}
\label{def:utility}
An algorithm $\Pi = (\Pi.\encode, \Pi.\decode)$ for representing sets has error probability at most $\alpha$
if, for any input set $S \subseteq U$
and any set of queries $Q \subseteq U$,
\[
\Pr[\forall q \in Q,  \bOne_{q \in S} \ne \Pi.\decode(\hat{S}, q)] \le \alpha^{|Q|}
\]
where $\hat{S} \leftarrow \Pi.\encode(S)$ and the probability is over the randomness of $\Pi.\encode$.
\end{definition}

For any set of queries $Q$, the probability that all
$|Q|$ queries are incorrect is at most $p^{|Q|}$. This is a stronger definition
than prior works that consider $|Q| = 1$ because it also ensures independence of incorrect answers. For example, consider any two queries $q_1 \ne q_2 \in U$. Each of them must be incorrect with probability at most $\alpha$ by setting $Q = \{q_1\}$ or $Q = \{q_2\}$.  Furthermore, they must be independent since
the probability that they are both incorrect is at most $\alpha^2$ by setting $Q = \{q_1,q_2\}$. This independence argument
may be extended to arbitrary query set with more than two queries.

We can also interpret this definition as per-entry expected error used in private histograms that bounds the absolute value between the true and decoded value.
Our definition may be viewed as
privately encoding an $|U|$-length binary vector
such that $\E[|\bOne_{q \in S} -\Pi.\decode(\hat{S}, q)|] \le \alpha$ for any element $q \in U$ and encoding $\hat{S} \leftarrow \Pi.\encode(S)$.
In other words, the expected per-entry error
is at most $\alpha$.

\section{Differentially private sets}
\label{sec:cons}
In this section, we present our main
two
constructions for differentially private sets.
Before we present our constructions, we present
a framework for building these algorithms using
linear systems that satisfy certain properties.
In particular, our work is inspired and generalizes prior retrieval data structures based on linear systems such as~\cite{porat,DS19,ribbon,okvs}.
Afterwards, we instantiate the linear systems
in two different ways to obtain our constructions (although, one could use other linear systems as we will provide some examples later).

\subsection{Framework from linear systems}
\label{sec:lin_sys_framework}

We present a general framework based on linear systems for building DP set mechanisms.
We consider linear systems over a finite field $\mathbb{F}$ with two functions: $\row$ and $\solve$.

Recall that our problem is to release differentially private representation of $S \subseteq U$ such that $|S| \leq \hat{k}$, where $\hat{k}$ is the input to the algorithm.
We assume $\row: U \rightarrow \mathbb{F}^{1 \times m}$ is a hash function mapping universe elements to row vectors of length $m$. Here, the parameter $m$ is a function of $\hat{k}$ and does not depend on the size of the input set $S$.
Given a set $S = \{s_1,\ldots,s_k\} \subseteq U$ of $k \leq \hat{k}$ elements, one can view $\row$ as hashing $S$ to a $k \times m$ matrix:
\[
\bM = \begin{bmatrix}
    & \row(s_1) & \\
    & \ldots & \\
    & \row(s_k) & \\
\end{bmatrix}.
\]
The algorithm $\solve$ takes an matrix $\bM \in \mathbb{F}^{k \times m}$ and solution vector $\bb \in \mathbb{F}^k$ to compute the solution $\bx \in \mathbb{F}^m$ satisfying $\bM \bx = \bb$. In particular, $\solve$ will make the assumption that $\bM$ is the generated output of $\row$ for some set $S \subseteq U$ of size $k$.
For our chosen linear systems, $\solve$ will be faster than the naive application of Gaussian elimination.
We also make some additional assumptions about $\solve$. First,
we will exclusively focus on the case where the matrix has more columns
than rows, $n \ge k$.
Secondly,
if the input matrix $\bM$ does not have full rank, then $\solve$ will return $\perp$.
Lastly, all free variables will be set to uniformly random elements from $\mathbb{F}$.

We note that $\row$ will generate rows in some structured way depending on the chosen linear system to ensure $\solve$ successfully outputs a solution with high probability
assuming the number of columns $m$ is sufficiently larger compared to the number of rows $k$. In our work, we focus on two constructions: random band~\cite{DS19} and Vandermonde matrices.
Although, our framework
is compatible with any linear system.

We will also use a
hash function $h: U \rightarrow \mathbb{F}$
that maps each element in the universe $U$ to elements
in $\mathbb{F}$. We will use $h$ to generate the solution
vector $\bb$ in the above linear system.
For some noised input set $S = \{s_1,\ldots,s_k\} \subseteq U$,
the solution vector will be $\bb = [h(s_1),\ldots,h(s_k)]^\intercal$.

In our work, we will assume that all
hash functions are fully random following prior works including~\cite{paghPrivateCountSketch,zhaoPrivateCountSketch,sparseVecMean}. In practical implementations,
we use cryptographic hash functions to replace this assumption as done in the past~\cite{dickens2022order,zhaoPrivateCountSketch}. Specifically, we will
assume that $h$ and $\row$ are fully random when necessary (for one of our constructions, $\row$ will be deterministic).

\noindent{\bf Encoding.}
Suppose we are given an input set $S=\{s_1,\ldots,s_{k}\} \subseteq U$ of size $|S| = k$.
First, we generate random hash function $h$ and (possibly random) row function $\row$.
Next, we will randomly sample a subset $S' \subseteq S$ such that each element of $S$ will appear in $S$ except with some {\em exclusion probability} $p$ (that we pick later during analysis).
For convenience, denote $S' = \{s'_1, \ldots, s'_{k'}\}$ where $k' = |S'|$.
Encoding works by constructing 
a matrix $\bM$ using $\row$ and noisy input set $S'$ as
$\bM = [\row(s'_1), \ldots, \row(s'_{k'})]^\intercal$.
Next, a solution vector $\bb$ is created by hashing each of the elements in $S'$ using the hash function $h$. So, $\bb = [h(s'_1), \ldots, h(s'_{k'})]^\intercal$. Finally,
we compute encoding $\bx$ using $\solve$ for the following linear system:
\[
\bM \bx = 
\begin{bmatrix}
    & \row(s'_1) & \\
    & \ldots & \\
    & \row(s'_{k'}) & \\
\end{bmatrix} \cdot \bx = 
\begin{bmatrix}
    & h(s'_1) & \\
    & \ldots & \\
    & h(s'_{k'}) & \\
\end{bmatrix}.
\]
The final encoding will be $\hat{S} = (\bx, h, \row)$.
See Algorithm~\ref{alg:subset_encode} for formal pseudocode.

\begin{minipage}{0.48\textwidth}
\begin{algorithm}[H]
\caption{$\sparseAlg.\encode$ algorithm}
\label{alg:subset_encode}
\begin{algorithmic}
\Require $S, p, m$: input, exclusion probability $p$, output length $m$
\Ensure $\hat{S}:$ DP encoding of $S$
\State Generate random hash function $h: U \rightarrow \mathbb{F}$.
\State Generate (random) $\row: U \rightarrow \mathbb{F}^{1\times m}$.
\State $S' \leftarrow \{\}$
\For{$s \in S$}
    \State Add $s$ to $S'$ with probability $1 - p$ 
\EndFor
\State $\bM \leftarrow $ $|S'| \times m$ matrix $\mathbb{F}^{|S'| \times m}$
\State $\bb \leftarrow $ length $|S'|$ column vector.
\For{$i \in [|S'|]$}
    \State $\bM[i] \leftarrow \row(S'[i])$
    \State $\bb[i] \leftarrow h(S'[i])$
\EndFor
\State $\bx \leftarrow \solve(\bM, \bb)$

\If {$bx \ne\ \perp$}
\Return $\hat{S} \leftarrow (\bx, \row, h, \perp)$ 
\Else
\Return $(\perp, \perp, \perp, S)$
\EndIf
\end{algorithmic}
\end{algorithm}
\end{minipage}
\hfill
\begin{minipage}{0.48\textwidth}
\begin{algorithm}[H]
\caption{$\sparseAlg.\decode$ algorithm}
\label{alg:subset_decode}
\begin{algorithmic}
\Require $\hat{S} = (\bx, \row, h, S), u$
\Ensure returns $b \in \{0,1\}$

\If{$S \ne\ \perp$}
\Return $u \in S$
\EndIf
\State $y \leftarrow \row(u) \cdot \bx$ 
\Return $\bOne_{y = h(u)}$
\end{algorithmic}
\end{algorithm}
\end{minipage}

We can view the above as using the linear system to embed linear constraints that are satisfied by elements of the noisy input set $S'$. For every $s' \in S'$, we know that $\row(s') \cdot \bx = h(s')$ assuming $\solve$ succeeded. In contrast, fix any $u \notin S'$. Then, we can see that $\Pr[h(u) = \row(u) \cdot \bx] = |\mathbb{F}|^{-1}$ since $h$ is a random hash function. So, elements outside of the set $S'$ are unlikely to be satisfy their corresponding linear constraint. We control this probability by picking the field size $|\mathbb{F}|$ accordingly.

We will capture the event of $\solve$ failing using $\delta$.
For our pure-DP construction, we guarantee that $\delta = 0$ and $\solve$
never fails. For our approximate-DP algorithm, we rely on the fact that
$\solve$ succeeds with high probability assuming $\row$ is randomly generated in a correct manner. If $\solve$ fails, we assume that encode simply returns
the input set $S$.

\noindent{\bf Decoding.}
Suppose we are given an encoding $\hat{S} = (\bx, h, \row)$ and an element $u \in U$. Decoding checks whether an element's corresponding linear system is satisfied
by computing $\row(u) \cdot \bx$ and comparing with $h(u)$.
In other words, the decoding algorithm simply returns
$\bOne_{\row(u) \cdot \bx = h(u)}$. We present the pseudocode in Algorithm~\ref{alg:subset_decode}.

We start by presenting the error probability (utility) with respect to field size $|\mathbb{F}|$ and the exclusion probability $p$ of removing any element.
We defer the proofs to Appendix~\ref{app:correctness_proof}.

\begin{theorem}
\label{thm:correct}
If $|\mathbb{F}| = \alpha^{-1}$ and $p = \alpha/(1-\alpha)$, then $\sparseAlg.\decode$ has error probability $\alpha$.
\end{theorem}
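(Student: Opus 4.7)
The plan is to compute the joint error probability for an arbitrary query set $Q$ directly, exploiting the randomness of both the subsampling step and the hash function $h$. I would split $Q$ into the queries $Q_1 = Q \cap S$ that truly lie in $S$ and the queries $Q_2 = Q \setminus S$ that do not, and argue that an error on each kind of query can be reduced to a simple independent event.

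First, for any $q \in Q_1$ an error occurs iff $\row(q)\cdot \bx \neq h(q)$. If $q \in S'$, then by the definition of $\bx$ as the output of $\solve(\bM,\bb)$, the equation $\row(q)\cdot\bx = h(q)$ holds (in the event $\solve$ succeeds; in the failure branch the encoding is the raw set $S$ and the decoder is perfect, contributing $0$ to the error probability, which is harmless). So the error on $q$ requires $q\notin S'$, an event of probability $p$, independently across distinct $q$'s by the element-wise Bernoulli subsampling. Conditioned on $q\notin S'$, the value $h(q)$ was never used in assembling $\bb$, hence $\bx$ is a function of $h$ restricted to $S'$ and the values of $\row$; since $h$ is fully random, $h(q)$ is uniform on $\mathbb{F}$ and independent of $\bx$, so $\Pr[\row(q)\cdot\bx \neq h(q)\mid q\notin S'] = 1 - 1/|\mathbb{F}|$. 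For $q \in Q_2$ the query is erroneous iff $\row(q)\cdot \bx = h(q)$; since $q\notin S \supseteq S'$, the same independence argument shows this probability equals $1/|\mathbb{F}|$.

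Next I would combine these contributions. Conditioning on the outcome of the subsampling, the random values $\{h(q): q\in Q, q\notin S'\}$ are mutually independent and independent of $\bx$, so the per-query events factor. Multiplying out gives
\[
\Pr[\forall q\in Q,\ \bOne_{q\in S}\neq \sparseAlg.\decode(\hat{S},q)]
= p^{|Q_1|} \cdot \bigl(1 - 1/|\mathbb{F}|\bigr)^{|Q_1|}\cdot \bigl(1/|\mathbb{F}|\bigr)^{|Q_2|}.
\]
Finally, plugging in $|\mathbb{F}| = \alpha^{-1}$ and $p = \alpha/(1-\alpha)$ collapses each $Q_1$-factor $p(1-1/|\mathbb{F}|) = \frac{\alpha}{1-\alpha}\cdot(1-\alpha) = \alpha$ and each $Q_2$-factor to $\alpha$, producing $\alpha^{|Q_1|+|Q_2|} = \alpha^{|Q|}$ as required by Definition~\ref{def:utility}.

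The main obstacle, and the step I would write most carefully, is the independence claim: that $h(q)$ for $q\notin S'$ is uniform and independent of $\bx$ and of the $h$-values at other query points. This is where full randomness of $h$ is used, together with the fact that $\bx$ is determined entirely by $\row$ and $\{h(s'):s'\in S'\}$. Once this independence is in place, the rest of the calculation is arithmetic and the choice of $p$ and $|\mathbb{F}|$ is reverse-engineered precisely to collapse both per-query factors to the same value $\alpha$.
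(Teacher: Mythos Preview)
Your argument is correct and follows essentially the same approach as the paper: compute the per-query false positive rate $|\mathbb{F}|^{-1}$ and false negative rate $p(1-|\mathbb{F}|^{-1})$, argue independence across queries via the full randomness of $h$ and the independent Bernoulli subsampling, and then choose $|\mathbb{F}|$ and $p$ so both rates equal $\alpha$. The paper packages the first two steps into a separate lemma and then plugs in the parameters, whereas you compute the joint probability over $Q$ directly; the underlying reasoning is identical. One cosmetic point: the displayed identity should really be an inequality ($\le$) once you account for the $\solve$-failure branch, since on that branch the error probability is $0$ and you are implicitly dropping the factor $\Pr[\solve\text{ succeeds}]\le 1$; this only helps the bound, so the conclusion is unaffected.
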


For error probability $\alpha$, we pick $|\mathbb{F}| \ge 1 / \alpha$ holds where $\mathbb{F}$ is a finite field. We note that there is a finite field of size $q^r$ for any prime $q$ and positive integer $r \ge 1$. For practical purposes, we use the smallest integer $q^r$ larger than $1 / \alpha$ that gives us slightly smaller error probability.

Next, we prove privacy of our framework.
We defer the full proof to Appendix~\ref{app:privacy_proof}.

\begin{theorem}
\label{thm:privacy_main}
If $\solve$ errs with probability at most $\delta$, then
$\sparseAlg$ is $(\epsilon, \delta)$-DP with error $(e^\epsilon + 1)^{-1}$. 
\end{theorem}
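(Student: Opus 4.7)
The plan is to fix two neighboring inputs $S_0$ and $S_1 = S_0 \cup \{u\}$ (without loss of generality) and prove the pointwise bound
\[
e^{-\epsilon} \;\le\; \frac{\Pr[\sparseAlg.\encode(S_1) = y]}{\Pr[\sparseAlg.\encode(S_0) = y]} \;\le\; e^\epsilon
\]
for every output $y = (\bx_0, \row_0, h_0, \perp)$ with $\bx_0 \ne \perp$. The remaining mass corresponds to the branch in which $\solve$ fails and the algorithm leaks $S$ in the fourth coordinate; by hypothesis this has probability at most $\delta$, which is exactly the additive slack permitted by Definition~\ref{def:dp}. Theorem~\ref{thm:correct} already pins down the parameters: error $\alpha = 1/(e^\epsilon + 1)$ forces exclusion probability $p = \alpha/(1-\alpha) = e^{-\epsilon}$ and field size $|\mathbb{F}| = 1/\alpha = e^\epsilon + 1$.

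\textbf{Pointwise mass formula.} Fix $y$ and let $T := \{s \in U : \row_0(s) \cdot \bx_0 = h_0(s)\}$ be the set of agreeing elements. The encoding returns $y$ exactly when $(h, \row)$ equals $(h_0, \row_0)$, the Bernoulli subsample $S' \subseteq S$ is contained in $T$, the matrix $\bM(S')$ of rows $\{\row_0(s) : s \in S'\}$ has full row rank, and the uniform free variables in $\solve$ pick $\bx_0$ out of the $|\mathbb{F}|^{m - |S'|}$ candidate solutions. Marginalising out the input-independent factors,
\[
\Pr[\sparseAlg.\encode(S) = y] \;=\; C \sum_{\substack{S' \subseteq S \cap T \\ \bM(S') \text{ full rank}}} p^{|S| - |S'|}(1-p)^{|S'|}\,|\mathbb{F}|^{-(m - |S'|)},
\]
where the prefactor $C$ depends only on $h_0, \row_0$ and cancels in ratios.

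\textbf{Case analysis and main obstacle.} If $u \notin T$ then the two sums index the same family ($S_0 \cap T = S_1 \cap T$) and moving from $S_0$ to $S_1$ injects exactly one extra factor $p$ from excluding $u$, so the ratio equals $p = e^{-\epsilon}$. If $u \in T$, I split the $S_1$-sum into subsamples $S'$ that avoid $u$ (which contribute $p \cdot \Pr[\sparseAlg.\encode(S_0) = y]$ as in the previous case) and those of the form $S'_0 \cup \{u\}$ with $S'_0 \subseteq S_0 \cap T$. For the second group the full-rank condition on $\bM(S'_0 \cup \{u\})$ is strictly stronger than on $\bM(S'_0)$, so I weaken it to the latter, which only enlarges the sum and bounds it by $(1-p)|\mathbb{F}| \cdot \Pr[\sparseAlg.\encode(S_0) = y]$. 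Using the identity $p + (1-p)|\mathbb{F}| = e^{-\epsilon} + (1 - e^{-\epsilon})(e^\epsilon + 1) = e^\epsilon$, this yields
\[
p \cdot \Pr[\sparseAlg.\encode(S_0) = y] \;\le\; \Pr[\sparseAlg.\encode(S_1) = y] \;\le\; e^\epsilon \cdot \Pr[\sparseAlg.\encode(S_0) = y],
\]
so both one-sided ratios are at most $e^\epsilon$; integrating over $R \subseteq \range(\sparseAlg)$ restricted to $\bx \ne \perp$ and adding the $\delta$ mass from the failure branch completes the $(\epsilon, \delta)$-DP argument. The main obstacle is precisely this rank asymmetry in the $u \in T$ case: the $S_1$-sum is only a subset of the family one would naively hope to match, so I exploit the rank inclusion only in the direction bounding $\Pr[\sparseAlg.\encode(S_1) = y]$ from above and recover the reverse bound by discarding the $u$-containing terms altogether.
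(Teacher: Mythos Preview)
Your proof is correct and follows essentially the same route as the paper's: both arguments fix a pointwise output $y$, decompose $\Pr[\encode(S_1)=y]$ according to whether $u$ survives subsampling, use $p=\alpha/(1-\alpha)=e^{-\epsilon}$ for the lower bound by dropping the $u$-containing terms, and use the factor $|\mathbb{F}|=\alpha^{-1}$ from the extra linear constraint for the upper bound, combining them via the identity $p+(1-p)|\mathbb{F}|=(1-\alpha)/\alpha=e^\epsilon$. Your explicit mass formula and the case split on $u\in T$ versus $u\notin T$ is a slightly cleaner packaging of what the paper handles by conditioning on $R_u$ and then on the surviving set (absorbing $u\notin T$ into the ``LHS $=0$'' case), but the substance is identical.
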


Our construction's expected per-entry error of $\alpha = 1/(e^\epsilon + 1)$ is exponentially smaller
than achievable by private histograms where $\Omega(1/\epsilon)$ error is required~\cite{dpGeometry}.

Next, we analyze the encoding size. In general, these are largely dependent
on the underlying linear system.
The encoding size depends on the number of variables (columns) $m$
in the linear system. Additionally, it also includes representations
of the functions $h$ and $\row$.

\begin{theorem}
$\sparseAlg.\encode$ outputs encodings of $m$ field elements and encodings of $h$ and $\row$.
\end{theorem}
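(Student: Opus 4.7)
The plan is to simply read off what Algorithm~\ref{alg:subset_encode} returns. On the success branch, the output is the tuple $(\bx, \row, h, \perp)$. The fourth coordinate is a single bit $\perp$ indicating which branch was taken and contributes negligible space, so the encoding is determined by $\bx$ together with descriptions of $\row$ and $h$. I would verify that $\bx \in \mathbb{F}^m$ by observing that $\solve$ is called on $\bM \in \mathbb{F}^{|S'| \times m}$ and $\bb \in \mathbb{F}^{|S'|}$, so its output is a vector of length equal to the number of columns of $\bM$, namely $m$. Thus $\bx$ is encoded by $m$ field elements of $\mathbb{F}$.

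For completeness, I would note the failure branch, where $\solve$ returns $\perp$ and the encoding is $(\perp, \perp, \perp, S)$. This branch is activated only when $\solve$ errs, which by Theorem~\ref{thm:privacy_main} is absorbed into the $\delta$ parameter of the differential privacy guarantee; it is not the typical output. One may either state the size bound conditioned on success, or charge the failure branch to the (negligible-probability) event whose probability mass is $\delta$, so the "typical" encoding is still characterized by $m$ field elements plus representations of $\row$ and $h$.

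I do not expect any real obstacle here: the claim is essentially a restatement of the pseudocode, and the only thing to check is that the length of the solution vector returned by $\solve$ equals $m$, which follows from the convention (stated in Section~\ref{sec:lin_sys_framework}) that $\solve$ takes an $\mathbb{F}^{k \times m}$ matrix and outputs $\bx \in \mathbb{F}^m$, with any free variables filled in uniformly at random from $\mathbb{F}$. Hence every coordinate of $\bx$ is a field element, and the total encoding size is $m \cdot \log_2 |\mathbb{F}|$ bits plus the cost of describing $h$ and $\row$, which are instantiated later when specific linear systems (random band matrices, Vandermonde) are plugged into the framework.
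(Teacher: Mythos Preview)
Your proposal is correct and matches the paper's treatment: the paper states this theorem without proof, since it follows immediately by inspecting Algorithm~\ref{alg:subset_encode} and noting that $\bx \in \mathbb{F}^m$ together with descriptions of $h$ and $\row$ constitute the encoding. Your additional remarks about the failure branch and the bit-size computation are more than the paper provides, but are consistent with it.
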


In Appendix~\ref{app:space_opt}, we outline a possible optimization to reduce encoding size by picking $m$ closer
to the expected size of the sampled set $S'$.
This turns out to be a more theoretical as we were unable to observe space improvements empirically for reasonable choices of set size $k$ and error probability $\alpha$.

\noindent{\bf Computational time.} 
For computation, the majority of the work is done by the underlying linear system. In particular, $\sparseAlg.\encode$ requires only $O(k)$ time outside of $\solve$ and $\row$. Similarly, $\sparseAlg.\decode$ requires
an execute of $\row$ and the computation will depend on the number of non-zero entries in $\row$. We analyze the computational costs for our instantiations later.

\noindent{\bf Larger error of $\alpha > 1/2$.}
Our constructions only consider error probabilities $\alpha \le 1/2$.
This is implicit as the smallest field has size at least $2$.
There are trivial algorithms to obtain mechanisms with $\epsilon = 0$ and $\delta = 0$ for the case of $\alpha \ge 1/2$ using a random hash function
(see Appendix~\ref{app:naive}). 

\subsection{Approximate differentially private sets}
\label{sec:approx_const}

From Section~\ref{sec:lin_sys_framework}, our goal essentially boils down to constructing a linear system where a solution exists and
may be efficiently computed
with high probability. Furthermore, we want to minimize
the number of variables required to ensure small encoding sizes.
To this end, we will use the {\em random band row vector} construction of \cite{DS19}.

The random band construction is parameterized by the row length $m$ and the band length $w$. At a high level, each row consists of a single band of $w$ random field elements. The band's location is chosen uniformly at random.
All $m - w$ entries outside of the band will be zero.
Formally, the construction uses hash functions $h_1: U \rightarrow [m - w + 1]$ and $h_2: U \rightarrow \mathbb{F}^{1 \times w}$.
For $u \in U$, $h_1(u)$ denotes the band's starting location
and $h_2(u)$ is the $w$ elements in the band.
Generating a random $\row_{\mathsf{band}}$ is equivalent to generating
the two random hash functions $h_1$ and $h_2$.
$\solve_\mathsf{band}$ works by sorting the rows by
starting band location and executing Gaussian elimination.
See Algorithms~\ref{alg:random_band_row} and~\ref{alg:random_band_solve}.

\begin{minipage}{0.48\textwidth}
\begin{algorithm}[H]
\caption{$\row_{\mathsf{band}}$ algorithm}
\label{alg:random_band_row}
\begin{algorithmic}
\Require $u$: element $u \in U$
\Ensure $\mathbf{v}:$ random band row vector
\State $m \leftarrow $ length of the vector
\State $\mathbf{v} \leftarrow 0^{1 \times m}$ (all zero row vector of length $m$)
\State $s \leftarrow h_1(u)$ 
\State $\mathbf{v}[s: s + w - 1] \leftarrow h_2(u)$
\Return $\mathbf{v}$ 
\end{algorithmic}
\end{algorithm}
\end{minipage}
\hfill
\begin{minipage}{0.48\textwidth}
\begin{algorithm}[H]
\caption{$\solve_{\mathsf{band}}$ algorithm}
\label{alg:random_band_solve}
\begin{algorithmic}
\Require $\bM, \bb$: matrix and vector
\Ensure $\mathbf{x}:$ solution satisfying $\bM\bx = \bb$
\State Sort rows by starting band location.
\State Execute Gaussian elimination and set free variables to be random elements in $\mathbb{F}$ to obtain encoding $\bx$.
\Return $\bx$ 
\end{algorithmic}\end{algorithm}
\end{minipage}

At a high level, 
If $k$ is the number of rows of the matrix, Dietzfelbinger and Walzer~\cite{DS19} showed that if $m = (1 + \beta)k$ and $w = O(\log k)$ for some constant $\beta > 0$, then the matrix generated using the random band row construction has full rank and $\solve_\mathsf{band}$ runs in $O(m w)$ time except with probability $O(1/m)$.
Bienstock {\em et al.}~\cite{okvs} extended this result to show that, if $w = O(\log(1/\delta) + \log k)$, then the matrix has full row rank and the linear system can be solved in time $O(mw)$ except with probability $\delta$.
$\sparseAlg.\decode$ takes $O(w)$ time since computing the dot product scales linearly with $w$, the length of the band.
We obtain the following using random band row vectors: 

\begin{theorem}
For any $\epsilon > 0$, $\delta > 0$, $\beta > 0$, there is an $(\epsilon,\delta)$-DP set mechanism with error $(e^\epsilon + 1)^{-1}$ and encodings consisting of $(1+\beta)k$ field elements and three hash functions.
$\sparseAlg.\encode$ takes $O(kw)$ time and $\sparseAlg.\decode$ takes $O(w)$ time where $w = O(\log(1/\delta) + \log k)$.
\end{theorem}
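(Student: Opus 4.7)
The plan is to instantiate the linear-system framework of Section~\ref{sec:lin_sys_framework} with random band row vectors and then invoke Theorems~\ref{thm:correct} and~\ref{thm:privacy_main}. Concretely, fix $\alpha = 1/(e^\epsilon + 1)$, choose a finite field $\mathbb{F}$ with $|\mathbb{F}| \ge 1/\alpha$, and set the exclusion probability to $p = \alpha/(1-\alpha) = 1/e^\epsilon$. Theorem~\ref{thm:correct} then immediately gives error probability $\alpha = (e^\epsilon + 1)^{-1}$, which matches the target utility in the statement.

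For the row construction, instantiate $\row_\mathsf{band}$ with $m = (1+\beta)k$ and band width $w = c(\log(1/\delta) + \log k)$ for a suitable constant $c$ pulled from the analysis of \cite{DS19, okvs}. Those works show that for a random band matrix with at most $k$ rows, $m = (1+\beta)k$ columns, and band width $w$, the matrix has full row rank and $\solve_\mathsf{band}$ produces a solution in time $O(mw)$ except with probability at most $\delta$. Because the framework first subsamples $S$ to $S'\subseteq S$, the actual number of rows fed to $\solve_\mathsf{band}$ is $|S'| \le k$; since shrinking the row count only increases the column-to-row ratio, the failure-probability bound continues to apply. Plugging this $\delta$ bound into Theorem~\ref{thm:privacy_main} yields the $(\epsilon,\delta)$-DP guarantee.

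For the space and time accounting, the encoding stores $\bx \in \mathbb{F}^m$, i.e.\ $(1+\beta)k$ field elements, together with the three random hash functions $h$, $h_1$, $h_2$ that collectively specify the map $h$ and the generator $\row_\mathsf{band}$; this is exactly the ``three hash functions'' in the statement. Encoding spends $O(k)$ time sampling $S'$ and building $\bb$, $O(|S'|\cdot w) = O(kw)$ time assembling $\bM$ (each row produced by $\row_\mathsf{band}$ has only $w$ nonzero entries), and $O(mw) = O(kw)$ time inside $\solve_\mathsf{band}$, so the total is $O(kw)$. Decoding reduces to the inner product $\row_\mathsf{band}(u)\cdot \bx$, which requires only $w$ multiplications because $\row_\mathsf{band}(u)$ is supported on $w$ coordinates, giving $O(w)$ access time.

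The one spot that deserves a careful line rather than a one-word appeal is the transfer of the failure-probability guarantee from the ``$k$-row'' statement of \cite{DS19, okvs} to the subsampled matrix with a random number of rows $|S'|$. The cleanest way is to observe that, conditioned on any realization $|S'| = k' \le k$, the band width we have chosen is at least as large as what is needed for a $k'$-row random band matrix with column count $m \ge (1+\beta)k'$, so the conditional failure probability is at most $\delta$; marginalizing over $|S'|$ preserves the bound. This is the only non-trivial obstacle, and once stated it slots directly into the hypothesis of Theorem~\ref{thm:privacy_main}.
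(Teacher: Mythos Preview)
Your proposal is correct and follows the same route as the paper: instantiate the linear-system framework with the random band construction of \cite{DS19,okvs}, invoke Theorems~\ref{thm:correct} and~\ref{thm:privacy_main}, and read off the space and time bounds from the band width $w$ and column count $m=(1+\beta)k$. You are in fact more explicit than the paper (which simply states the theorem after a short discussion), particularly in your careful justification that the failure bound for $k$-row matrices transfers to the subsampled matrix with $|S'|\le k$ rows; the paper glosses over this point entirely.
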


\subsection{Pure differentially private sets}
\label{sec:pure_const}
We consider a pure differentially private construction of the framework in Section~\ref{sec:lin_sys_framework} with $\delta = 0$.
In Section~\ref{sec:lin_sys_framework}, the failure probability of solving the constructed linear system corresponds to $\delta$ in the DP definition. To obtain a pure DP construction, our goal is to construct a linear system that is solvable with probability 1. So, we want to construct a matrix $\bM$ that has full rank with probability 1.
To do this, we use the {\em Vandermonde matrix} construction (where $\row$ is
deterministic) that may be solved in
$O(k\log^2 k)$ time as shown in~\cite{BM}.
This construction has another advantages over the random band approach
beyond obtaining $\delta = 0$. The resulting encodings are smaller
with only $k$ field elements whereas the other construction requires $m = (1+\beta)k$ field elements with $\beta > 0$.
In contrast, decoding times are larger here.
See Appendix~\ref{sec:pure_const_appendix} for full description and proof.

\begin{theorem}
\label{thm:pure_dp}
For any $\epsilon > 0$, there exists an $\epsilon$-DP set mechanism with error $(e^\epsilon + 1)^{-1}$.
$\sparseAlg.\encode$ takes $O(k \log^2 k)$ time and $\sparseAlg.\decode$ takes $O(k)$ time. 
\end{theorem}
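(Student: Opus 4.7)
The plan is to instantiate the framework of Section~\ref{sec:lin_sys_framework} with a Vandermonde linear system, so that the $\solve$ step is guaranteed to succeed and the privacy parameter $\delta$ can be taken to be exactly $0$. Concretely, I would fix a finite field $\mathbb{F}$ of size $q \ge \max(\alpha^{-1}, |U|)$ where $\alpha = 1/(e^\epsilon+1)$, identify each universe element with a distinct element of $\mathbb{F}$, set $m = k$, and define the deterministic row function
\[
\row(u) \;=\; (1,\, u,\, u^2,\, \ldots,\, u^{k-1}).
\]
The hash function $h: U \to \mathbb{F}$ remains uniformly random, as required by the framework.

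The first step is to confirm that $\solve$ never fails on matrices generated this way. For any sampled $S' = \{s'_1,\ldots,s'_{k'}\} \subseteq S$ with $k' \le k$, the matrix $\bM$ whose rows are $\row(s'_i)$ is a rectangular Vandermonde matrix with pairwise distinct generators, and therefore has full row rank. Hence the system $\bM\bx = \bb$ is always consistent and $\solve$ can return a solution (with free variables sampled uniformly, as required). Plugging $\delta = 0$ into Theorem~\ref{thm:privacy_main} together with $|\mathbb{F}| \ge \alpha^{-1}$ from Theorem~\ref{thm:correct} then immediately gives pure $\epsilon$-DP with error at most $(e^\epsilon+1)^{-1}$.

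The running times follow from fast polynomial algorithms. Decoding computes $\row(u)\cdot\bx = \sum_{i=0}^{k-1} u^i\,\bx[i]$, which is a single polynomial evaluation at $u$; Horner's rule gives the claimed $O(k)$ time, followed by one comparison with $h(u)$. For encoding, solving $\bM\bx=\bb$ for a Vandermonde $\bM$ is equivalent to interpolating the unique polynomial of degree $<k$ that maps $s'_i \mapsto \bb[i]$, and this can be done in $O(k\log^2 k)$ field operations via the subproduct-tree / multipoint-evaluation machinery of \cite{BM}. The framework's remaining per-element bookkeeping is $O(k)$ and does not affect the bound.

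The main obstacle is handling the rectangular case $k' < k$ that arises after the subsampling step inside $\sparseAlg.\encode$: the textbook $O(k\log^2 k)$ Vandermonde solver is stated for square invertible systems. I would address this by choosing $k - k'$ additional uniformly random evaluation targets (or equivalently sampling uniform coefficients in the null space after a particular solution is found), completing the system to a square Vandermonde interpolation instance and then invoking the fast solver; one must check that the induced distribution on $\bx$ agrees with the framework's requirement that free variables be uniform in $\mathbb{F}$, which is where the privacy analysis hinges. Once this plumbing is settled, the theorem follows by directly combining Theorem~\ref{thm:correct} and Theorem~\ref{thm:privacy_main} with $\delta = 0$.
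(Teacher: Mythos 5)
Your instantiation is the same Vandermonde approach as the paper's Appendix~\ref{sec:pure_const_appendix}, and your treatment of the rectangular case (completing to a square interpolation instance by adding $k-k'$ fresh random evaluation targets, then arguing the induced distribution on $\bx$ is uniform over the solution affine subspace) is a correct and more explicit account of a point the paper passes over by simply citing \cite{BM}.

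There is, however, a genuine gap in your choice of field size. You fix $|\mathbb{F}|=q\ge\max(\alpha^{-1},|U|)$ and then cite Theorems~\ref{thm:correct} and~\ref{thm:privacy_main}, but both are proved (Theorem~\ref{thm:sparseAlgPrivacy}) under the hypothesis $|\mathbb{F}|=\alpha^{-1}$ \emph{exactly}, not $|\mathbb{F}|\ge\alpha^{-1}$. The crux of the privacy proof is that, conditioned on the same surviving subset $S'$ (with the extra element $u$ present vs.\ absent), the added row removes one free variable, so the probability of emitting a fixed consistent $\bx$ is exactly $|\mathbb{F}|$ times larger; combining with the removal probability $p=\alpha/(1-\alpha)$ gives a likelihood ratio of $p+(1-p)|\mathbb{F}|$, which equals $e^\epsilon=(1-\alpha)/\alpha$ only when $|\mathbb{F}|=\alpha^{-1}$. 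If $|U|>\alpha^{-1}$, your construction uses $|\mathbb{F}|>\alpha^{-1}$, the ratio strictly exceeds $e^\epsilon$, and the mechanism is no longer $\epsilon$-DP; the false-negative rate $p(1-1/|\mathbb{F}|)$ also exceeds $\alpha$. The paper sidesteps this by writing ``Suppose the universe $U=\mathbb{F}$'', which tacitly restricts the universe to size $\alpha^{-1}=e^\epsilon+1$ — a restriction the theorem statement does not advertise. Your instinct to accommodate larger universes is the right concern, but enlarging $\mathbb{F}$ is not the fix: to keep privacy one must keep $|\mathbb{F}|=\alpha^{-1}$, which for the Vandermonde construction forces either the implicit constraint $|U|\le\alpha^{-1}$ or the use of a hash $g:U\to\mathbb{F}$ whose collisions would reintroduce a nonzero $\delta$ (losing pure DP).
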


\noindent{\bf Other Constructions.}
We present two concrete constructions from specific linear systems, but it is possible to plug in other linear systems.
For example, plugging in~\cite{gilbert2017all} would result in a pure DP solution with faster encoding times, but larger encoding sizes
compared to Theorem~\ref{thm:pure_dp}.

\section{Lower bounds}
\label{sec:lb}
\noindent{\bf Privacy-utility lower bounds.}
We start by considering the possibility of improving the error
probability (utility) with respect to the desired levels
of privacy.
Our construction achieved error probability at most $1/(e^\epsilon + 1)$ for any choice of $\epsilon \ge 0$.
In other words, for any error $\alpha$, our construction achieves privacy $\epsilon = \log((1-\alpha)/\alpha)$.
We show that this trade-off between $\epsilon$ and error probability is optimal even up to constants (ignoring $\delta$ factors). See Appendix~\ref{app:lb_proof} for the proof.

\begin{theorem}
\label{thm:error_lb}
Consider any $(\epsilon,\delta)$-DP algorithm $\Pi$ for sets of size $k$.
Suppose that $\Pi$ has error probability at most $\alpha \le 1/2$. Then,
$
\epsilon \ge \ln((1-\alpha - \delta)/\alpha)$.
In other words, for a fixed privacy level $\epsilon \ge 0$ and $\delta \ge 0$, the error probability of $\Pi$ must be $\alpha \ge (1 - \delta)(e^\epsilon + 1)$.
\end{theorem}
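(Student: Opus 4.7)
The plan is a standard hypothesis testing style DP lower bound: construct two neighboring input sets that disagree on membership of a single element and use the error guarantee together with the DP inequality on the single-bit decoding event.

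Concretely, I would pick any element $u \in U$ and let $S' \subseteq U$ be any set of size $k-1$ not containing $u$, and let $S = S' \cup \{u\}$, so that $|S - S'| = 1$ and both are valid inputs (padding sizes if needed). Consider the event
\[
R = \{\,\hat{T} \in \range(\Pi.\encode) \;:\; \Pi.\decode(\hat{T}, u) = 1\,\}.
\]
By the utility assumption applied to the query set $Q = \{u\}$: when the input is $S$, decoding $u$ returns the wrong answer (i.e.\ $0$) with probability at most $\alpha$, so $\Pr[\Pi.\encode(S) \in R] \ge 1 - \alpha$; when the input is $S'$, decoding $u$ returns the wrong answer (i.e.\ $1$) with probability at most $\alpha$, so $\Pr[\Pi.\encode(S') \in R] \le \alpha$.

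Applying the $(\epsilon,\delta)$-DP guarantee of Definition~\ref{def:dp} to the set $R$ with inputs $S$ and $S'$ gives
\[
1 - \alpha \;\le\; \Pr[\Pi.\encode(S) \in R] \;\le\; e^\epsilon \cdot \Pr[\Pi.\encode(S') \in R] + \delta \;\le\; e^\epsilon \cdot \alpha + \delta.
\]
Rearranging yields $e^\epsilon \ge (1 - \alpha - \delta)/\alpha$, i.e.\ $\epsilon \ge \ln((1-\alpha-\delta)/\alpha)$, which is equivalent to the claimed bound $\alpha \ge (1-\delta)/(e^\epsilon + 1)$ after solving for $\alpha$.

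There is essentially no obstacle: the argument is just a two-point DP indistinguishability bound, and the only subtlety is justifying that the utility definition (Definition~\ref{def:utility}) truly gives the two-sided error bound on the single query $u$ used above, which it does by taking $|Q|=1$ on each side. One mild care point is that the set $R$ we test must be a measurable subset of $\range(\Pi.\encode)$; since $\Pi.\decode(\cdot,u)$ is a deterministic post-processing, $R$ is just the preimage of $\{1\}$ under this post-processing and hence valid in the DP definition.
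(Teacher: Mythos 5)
Your proof is correct and follows essentially the same two-point hypothesis-testing argument as the paper: both construct neighboring inputs differing in a single element $u$, use the utility bound on the query set $\{u\}$ to get $\Pr[\text{decode }u = 1] \ge 1-\alpha$ versus $\le \alpha$, and then apply the $(\epsilon,\delta)$-DP inequality to the decoding event. The only cosmetic difference is that the paper phrases the neighboring inputs as binary vectors $\bx,\by$ rather than sets $S',S$.
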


\noindent{\bf Space lower bounds.}
Next, we move onto determining the necessary space usage of set representations. There exist space lower bounds for probabilistic membership data structures (such as Bloom filters) that have a false positive probability of $\alpha$ and no false negatives. It is well known that such data structures
require $k \cdot \log(1/\alpha)$ bits of space
when given an input of size $k$.
However, these lower bounds only apply when the false negative rate is $0$.
See Broder and Mitzenmacher~\cite{broder2004network} for the prior lower bound.
We present a space lower bound for DP mechanisms with non-zero false negatives using a proof through compression that deviates from prior counting arguments (see Appendix~\ref{app:lb_proof}).

\begin{theorem}
\label{thm:space_lb}
Consider any $(\epsilon,\delta)$-DP $\Pi$ for sets of size $k$. If $\Pi$ produces $s$-bit encodings with error probability $0 < \alpha \le 1/2$, then
$\E[s] = \Omega\left(\left(1 + \delta/(e^\epsilon)\right) \cdot k  \cdot \log(1/\alpha) \right)$.
\end{theorem}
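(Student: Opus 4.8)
The plan is to convert any $(\epsilon,\delta)$-DP set mechanism $\Pi$ with $s$-bit encodings and error $\alpha$ into a lossless code for a uniformly random $k$-set and then appeal to Shannon's source-coding theorem. First I would reduce to a convenient universe: restricting $\Pi$ to inputs and queries drawn from a sub-universe $U' \subseteq U$ keeps it $(\epsilon,\delta)$-DP and keeps the per-query error at most $\alpha$, so I may assume $|U| = n$ for any $n$ I like as long as $n$ is large enough for the argument to bite (if the real universe is smaller the bound simply degrades to $\log\binom{n}{k}$, which is exactly the $\min$ term in the precise statement). I would take $n = \Theta(k/\alpha)$ and let $S$ be uniform over the $\binom{n}{k}$ subsets of size $k$, so that $H(S) = \log\binom{n}{k} = \Theta(k\log(1/\alpha))$.

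The code is as follows. The encoder runs $\hat S \leftarrow \Pi.\encode(S)$, then reproduces what the decoder will see, namely the decoded set $\tilde S := \{ u \in U : \Pi.\decode(\hat S, u) = 1 \}$, and appends, with self-delimiting length headers, the false-negative set $F_- := S \setminus \tilde S$ described as a subset of $U$ and the false-positive set $F_+ := \tilde S \setminus S$ described as a subset of $\tilde S$. From $\hat S$ alone the decoder recomputes $\tilde S$ and outputs $(\tilde S \setminus F_+) \cup F_- = S$, so the code is lossless; assuming WLOG that $\hat S$ is itself prefix-free (at an additive $O(\log n)$ cost), source coding gives
\[
\E[s] + \E\!\left[\log\binom{n}{|F_-|}\right] + \E\!\left[\log\binom{|\tilde S|}{|F_+|}\right] + O(\log n) \;\ge\; H(S) = \log\binom{n}{k}.
\]
Per-query error at most $\alpha$ (Definition~\ref{def:utility} with $|Q|=1$) yields $\E[|F_-|] \le \alpha k$ and $\E[|F_+|] \le \alpha(n-k)$. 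Concavity of $j \mapsto \log\binom{n}{j}$ and Jensen give $\E[\log\binom{n}{|F_-|}] \le \log\binom{n}{\alpha k}$; and since $|\tilde S| = k - |F_-| + |F_+| \le k + |F_+|$ we have $\binom{|\tilde S|}{|F_+|} \le \binom{k+|F_+|}{k} \le (e(1+|F_+|/k))^{k}$, so by Jensen again $\E[\log\binom{|\tilde S|}{|F_+|}] \le k\log\!\big(e(1+\alpha n/k)\big)$. Substituting $n = \Theta(k/\alpha)$ and using $\log\binom{n}{k} \ge k\log(n/k)$, the terms proportional to $k$ (rather than $k\log(1/\alpha)$) cancel or get absorbed into the error term, leaving $\E[s] \ge (1-2\alpha)\,k\log(1/\alpha) - O(k) - O(\log n) = \Omega(k\log(1/\alpha))$ for $\alpha$ bounded away from $1/2$. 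This is the $\delta$-free part of the theorem; note that when $F_- = \emptyset$ it specializes to the classical Bloom-filter counting bound, and the genuinely new ingredient is paying $\log\binom{n}{|F_-|}$ for the false negatives.

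It remains to strengthen the leading constant from $1$ to $1 + \delta/e^\epsilon$, and this is where I would use differential privacy and where I expect the real difficulty to lie. The mechanism to keep in mind as the tight instance is the one that with probability $\approx \delta$ outputs $S$ exactly -- a string of $\log\binom{n}{k} = \Theta(k\log(1/\alpha))$ bits -- and otherwise runs randomized response; by the utility lower bound (Theorem~\ref{thm:error_lb}) this is essentially the extremal way to spend the $(\epsilon,\delta)$ budget while keeping error $\alpha$, and a general $(\epsilon,\delta)$-DP mechanism must devote a comparable $\Theta(\delta/e^\epsilon)$ fraction of its probability mass to encodings that pin down $S$ almost exactly. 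The plan is to make this precise by combining the $(\epsilon,\delta)$-DP inequality with the conditional form of the compression bound: on the event that the encoding does \emph{not} essentially determine $S$ one still pays $\Omega(k\log(1/\alpha))$ bits by the previous paragraph run conditionally, while on the remaining $\Theta(\delta/e^\epsilon)$-mass event the encoding carries $\Omega(H(S)) = \Omega(k\log(1/\alpha))$ bits, for a total of $(1 + \Theta(\delta/e^\epsilon))\,k\log(1/\alpha)$. The main obstacle is exactly this last step -- extracting the factor $\delta/e^\epsilon$, and not some weaker function of $\delta$ and $\epsilon$, from the DP constraint, presumably via a careful hockey-stick/group-privacy estimate rather than naive chaining across far-apart sets (which would lose an $e^{\Theta(k)\epsilon}$ factor). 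The source-coding reduction and the correction-cost estimates are routine by comparison.
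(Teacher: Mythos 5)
Your compression argument is essentially the paper's Lemma~\ref{lem:space_lb}, though the paper's version is cleaner in two ways that matter for constants. First, the paper encodes the false negatives $X\cap A$ as a subset of $A$ (the indices decoded to $0$) rather than of all of $U$, and the true positives $X\cap B$ as a subset of $B$; this is the ``right'' reference set, and when you work out $\log\binom{n}{k}-\log\binom{|A|}{|X\cap A|}-\log\binom{|B|}{|X\cap B|}$ with $\E[|A|]=(1-\alpha)n$, $\E[|X\cap A|]=\alpha k$, the $n$-dependence cancels \emph{exactly}, giving $(1-2\alpha)k\log\bigl(\tfrac{1-\alpha}{\alpha}\bigr)$ with only $-2\log k-\log\log(en/k)$ error. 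You instead encode $F_-$ against $U$ and fix $n=\Theta(k/\alpha)$, which leaves a residual $-O(k)$; this is harmless for small $\alpha$ but eats the whole bound as $\alpha\to 1/2$, so your ``$\alpha$ bounded away from $1/2$'' caveat is load-bearing in a way the paper's proof avoids. (Your $F_+$ encoding is fine: $\binom{|\tilde S|}{|F_+|}=\binom{|B|}{|X\cap B|}$, so it agrees with the paper's.) The prefix-freeness bookkeeping and the appeal to Shannon are identical.

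The gap is the $\bigl(1+\delta/e^\epsilon\bigr)$ factor, which you explicitly leave as a plan and flag as the hard part. Here you are overthinking it relative to what the paper actually does. The paper does \emph{not} argue that a $\Theta(\delta/e^\epsilon)$ fraction of the probability mass must carry $\Omega(H(S))$ bits, nor does it condition on an ``encoding pins down $S$'' event, nor does it use any hockey-stick or group-privacy estimate. It simply takes the coefficient $(1-2\alpha)$ produced by Lemma~\ref{lem:space_lb} and substitutes the extremal value from the privacy--utility bound (Theorem~\ref{thm:error_lb}), $\alpha=(1-\delta)/(e^\epsilon+1)$, to rewrite
\[
1-2\alpha \;=\; \frac{e^\epsilon-1+2\delta}{e^\epsilon+1},
\]
which the paper then packages as $\Theta\bigl(1+\delta/e^\epsilon\bigr)$. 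So in the paper's framing $\bigl(1+\delta/e^\epsilon\bigr)$ is just a bookkeeping of the leading constant coming from a one-line substitution, not an additive contribution from a separate low-probability revealing event. Your proposed conditional decomposition is a genuinely different (and much harder) route, aimed at a stronger claim than the paper establishes; since you did not carry it out, this is where your proof is incomplete. Be aware, also, that the paper's substitution goes in the dubious direction: $\alpha\ge(1-\delta)/(e^\epsilon+1)$ upper-bounds $1-2\alpha$ rather than lower-bounding it, and indeed at $\alpha=1/2$ the lemma correctly degenerates to $0$ while the theorem's stated $\Omega(k\log(1/\alpha))=\Omega(k)$ does not---so your sense that something subtle is happening here is not unreasonable, but the subtlety is in the paper's bookkeeping, not in a deep DP argument you were missing.
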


\section{Experimental evaluation}
\label{sec:exp}
\begin{figure*}
\centering

\resizebox{1\textwidth}{120pt}{
\begin{tikzpicture}
\begin{axis}[
    title={Utility (Error Probability): $k = 2^{12}$},
    xmin=0.5, xmax=6,
    ymin=0, ymax=0.7,
    xtick={0.5, 1, 1.5, 2, 2.5, 3, 3.5, 4, 4.5, 5, 5.5, 6},
    ytick={0.1, 0.2, 0.3, 0.4, 0.5, 0.6, 0.7},
    legend pos=north east,
    ymajorgrids=true,
    grid style=dashed,
]

\addplot[
    color=red,
    mark=square, dashed
    ]
    coordinates {
    (0.693, 0.335)(1.1,0.251)(1.95,0.126)(2.71,0.0623)(3.43,0.0313)(4.14,0.01572)(4.84,0.008)(5.54,0.004)
    };
    \addlegendentry{$\sparseAlg$ (Ours)}

\addplot[
    color=green,
    mark=star
    ]
    coordinates {
    (0.693, 0.406)(1.1,0.35997)(1.95,0.29515)(2.71,0.254729)(3.43,0.224282)(4.14,0.200956)(4.84,0.181065)(5.54,0.166073)
    };
    \addlegendentry{ALP}

\addplot[
    color=blue,
    mark=triangle
    ]
    coordinates {
    (0.693, 0.482)(1.1,0.471)(1.95,0.450)(2.71,0.432)(3.43,0.415)(4.14,0.399)(4.84,0.384)(5.54,0.369)
    };
    \addlegendentry{DP Count Sketch}

\addplot[
    color=magenta,
    mark=dash, dashed
    ]
    coordinates {
    (0.693, 0.333)(1.1,0.25)(1.95,0.125)(2.71,0.0625)(3.43,0.03125)(4.14,0.01575)(4.84,0.008)(5.54,0.004)
    };
    \addlegendentry{Lower Bound (Ours)}

\end{axis}

\end{tikzpicture}
\begin{tikzpicture}
\begin{axis}[
    title={Utility (Error Probability): $k = 2^{16}$},
    xmin=0.5, xmax=6,
    ymin=0, ymax=0.7,
    xtick={0.5, 1, 1.5, 2, 2.5, 3, 3.5, 4, 4.5, 5, 5.5, 6},
    ytick={0.1, 0.2, 0.3, 0.4, 0.5, 0.6, 0.7},
    legend pos=north east,
    ymajorgrids=true,
    grid style=dashed,
]

\addplot[
    color=red,
    mark=square, dashed
    ]
    coordinates {
    (0.693, 0.335)(1.1,0.251)(1.95,0.126)(2.71,0.0623)(3.43,0.0313)(4.14,0.01572)(4.84,0.008)(5.54,0.004)
    };
    \addlegendentry{$\sparseAlg$ (Ours)}

\addplot[
    color=green,
    mark=star
    ]
    coordinates {
    (0.693, 0.406)(1.1,0.35997)(1.95,0.29515)(2.71,0.254729)(3.43,0.224282)(4.14,0.200956)(4.84,0.181065)(5.54,0.166073)
    };
    \addlegendentry{ALP}

\addplot[
    color=blue,
    mark=triangle
    ]
    coordinates {
    (0.693, 0.482)(1.1,0.471)(1.95,0.450)(2.71,0.432)(3.43,0.415)(4.14,0.399)(4.84,0.384)(5.54,0.369)
    };
    \addlegendentry{DP Count Sketch}

\addplot[
    color=magenta,
    mark=dash, dashed
    ]
    coordinates {
    (0.693, 0.333)(1.1,0.25)(1.95,0.125)(2.71,0.0625)(3.43,0.03125)(4.14,0.01575)(4.84,0.008)(5.54,0.004)
    };
    \addlegendentry{Lower Bound (Ours)}

\end{axis}

\end{tikzpicture}
\begin{tikzpicture}
\begin{axis}[
    title={Utility (Error Probability): $k = 2^{20}$},
    xmin=0.5, xmax=6,
    ymin=0, ymax=0.7,
    xtick={0.5, 1, 1.5, 2, 2.5, 3, 3.5, 4, 4.5, 5, 5.5, 6},
    ytick={0.1, 0.2, 0.3, 0.4, 0.5, 0.6, 0.7},
    legend pos=north east,
    ymajorgrids=true,
    grid style=dashed,
]

\addplot[
    color=red,
    mark=square, dashed
    ]
    coordinates {
    (0.693, 0.335)(1.1,0.251)(1.95,0.126)(2.71,0.0623)(3.43,0.0313)(4.14,0.01572)(4.84,0.008)(5.54,0.004)
    };
    \addlegendentry{$\sparseAlg$ (Ours)}

\addplot[
    color=green,
    mark=star
    ]
    coordinates {
    (0.693, 0.406)(1.1,0.35997)(1.95,0.29515)(2.71,0.254729)(3.43,0.224282)(4.14,0.200956)(4.84,0.181065)(5.54,0.166073)
    };
    \addlegendentry{ALP}

\addplot[
    color=blue,
    mark=triangle
    ]
    coordinates {
    (0.693, 0.482)(1.1,0.471)(1.95,0.450)(2.71,0.432)(3.43,0.415)(4.14,0.399)(4.84,0.384)(5.54,0.369)
    };
    \addlegendentry{DP Count Sketch}

\addplot[
    color=magenta,
    mark=dash, dashed
    ]
    coordinates {
    (0.693, 0.333)(1.1,0.25)(1.95,0.125)(2.71,0.0625)(3.43,0.03125)(4.14,0.01575)(4.84,0.008)(5.54,0.004)
    };
    \addlegendentry{Lower Bound (Ours)}

\end{axis}

\end{tikzpicture}
}

\resizebox{1\textwidth}{120pt}{
\begin{tikzpicture}
\begin{axis}[
    title={Encode Time (ms): $k = 2^{12}$},
    xmin=0.5, xmax=6,
    ymin=0, ymax=6,
    xtick={0.5, 1, 1, 1.5, 2, 2.5, 3, 3.5, 4, 4.5, 5, 5.5, 6},
    ytick={1, 2, 3, 4, 5, 6},
    legend pos=north east,
    ymajorgrids=true,
    grid style=dashed,
]

\addplot[
    color=red,
    mark=square
    ]
    coordinates {
    (1.1,1.75)(1.95,2.159)(2.71,2.459)(3.43,2.659)(4.14,3.112)(4.84,3.459)(5.54,3.857)
    };
    \addlegendentry{$\sparseAlg$ (Ours)}

\addplot[
    color=green,
    mark=star
    ]
    coordinates {
    (1.1,0.631)(1.95,0.898)(2.71,1.151)(3.43,1.447)(4.14,1.698)(4.84,1.965)(5.54,2.225)
    };
    \addlegendentry{ALP}

\addplot[
    color=blue,
    mark=triangle
    ]
    coordinates {
    (1.1,1.179)(1.95,1.536)(2.71,1.874)(3.43,2.261)(4.14,2.631)(4.84,2.945)(5.54,3.277)
    };
    \addlegendentry{DP Count Sketch}

\end{axis}
\end{tikzpicture}

\begin{tikzpicture}
\begin{axis}[
    title={Encode Time (ms): $k = 2^{16}$},
    xmin=0.5, xmax=6,
    ymin=0, ymax=100,
    xtick={0.5, 1, 1, 1.5, 2, 2.5, 3, 3.5, 4, 4.5, 5, 5.5, 6},
    ytick={25, 50, 75, 100},
    legend pos=north east,
    ymajorgrids=true,
    grid style=dashed,
]

\addplot[
    color=red,
    mark=square
    ]
    coordinates {
    (1.1,33)(1.95,35)(2.71,37)(3.43,38)(4.14,40)(4.84,41)(5.54,42)
    };
    \addlegendentry{$\sparseAlg$ (Ours)}

\addplot[
    color=green,
    mark=star
    ]
    coordinates {
    (1.1,9.634)(1.95,13.9)(2.71,18.8)(3.43,22.7)(4.14,28.3)(4.84,32.1)(5.54,36.2)
    };
    \addlegendentry{ALP}

\addplot[
    color=blue,
    mark=triangle
    ]
    coordinates {
    (1.1,18)(1.95,24)(2.71,30)(3.43,35)(4.14,42)(4.84,48)(5.54,53)
    };
    \addlegendentry{DP Count Sketch}

\end{axis}
\end{tikzpicture}

\begin{tikzpicture}
\begin{axis}[
    title={Encode Time (ms): $k = 2^{20}$},
    xmin=0.5, xmax=6,
    ymin=0, ymax=1750,
    xtick={0.5, 1, 1, 1.5, 2, 2.5, 3, 3.5, 4, 4.5, 5, 5.5, 6},
    ytick={250, 500, 750, 1000, 1250, 1500, 1750},
    legend pos=north east,
    ymajorgrids=true,
    grid style=dashed,
]

\addplot[
    color=red,
    mark=square
    ]
    coordinates {
    (1.1,308)(1.95,350)(2.71,373)(3.43,389)(4.14,403)(4.84,411)(5.54,431)
    };
    \addlegendentry{$\sparseAlg$ (Ours)}

\addplot[
    color=green,
    mark=star
    ]
    coordinates {
    (1.1,178)(1.95,247)(2.71,317)(3.43,383)(4.14,456)(4.84,521)(5.54,582)
    };
    \addlegendentry{ALP}

\addplot[
    color=blue,
    mark=triangle
    ]
    coordinates {
    (1.1,326)(1.95,421)(2.71,512)(3.43,606)(4.14,695)(4.84,789)(5.54,876)
    };
    \addlegendentry{DP Count Sketch}

\end{axis}
\end{tikzpicture}
}

\resizebox{1\textwidth}{120pt}{
\begin{tikzpicture}
\begin{axis}[
    title={Avg. Decode Time (ms): $k = 2^{12}$},
    xmin=0.5, xmax=6,
    ymin=0, ymax=0.5,
    xtick={0.5, 1, 1.5, 2, 2.5, 3, 3.5, 4, 4.5, 5, 5.5, 6},
    ytick={0.1, 0.2, 0.3, 0.4, 0.5},
    legend pos=north east,
    ymajorgrids=true,
    grid style=dashed,
] 

\addplot[
    color=red,
    mark=square
    ]
    coordinates {
    (1.1,0.259)(1.95,0.268)(2.71,0.259)(3.43,0.257)(4.14,0.259)(4.84,0.262)(5.54,0.267)
    };
    \addlegendentry{$\sparseAlg$ (Ours)}

\addplot[
    color=green,
    mark=star
    ]
    coordinates {
    (1.1,0.030)(1.95,0.030)(2.71,0.028)(3.43,0.033)(4.14,0.030)(4.84,0.030)(5.54,0.029)
    };
    \addlegendentry{ALP}

\addplot[
    color=blue,
    mark=triangle
    ]
    coordinates {
    (1.1,0.047)(1.95,0.048)(2.71,0.049)(3.43,0.046)(4.14,0.046)(4.84,0.054)(5.54,0.053)
    };
    \addlegendentry{DP Count Sketch}

\end{axis}
\end{tikzpicture}

\begin{tikzpicture}
\begin{axis}[
    title={Avg. Decode Time (ms): $k = 2^{16}$},
    xmin=0.5, xmax=6,
    ymin=0, ymax=0.5,
    xtick={0.5, 1, 1.5, 2, 2.5, 3, 3.5, 4, 4.5, 5, 5.5, 6},
    ytick={0.1, 0.2, 0.3, 0.4, 0.5},
    legend pos=north east,
    ymajorgrids=true,
    grid style=dashed,
] 

\addplot[
    color=red,
    mark=square
    ]
    coordinates {
    (1.1,0.272)(1.95,0.278)(2.71,0.271)(3.43,0.279)(4.14,0.270)(4.84,0.273)(5.54,0.275)
    };
    \addlegendentry{$\sparseAlg$ (Ours)}

\addplot[
    color=green,
    mark=star
    ]
    coordinates {
    (1.1,0.043)(1.95,0.058)(2.71,0.064)(3.43,0.075)(4.14,0.091)(4.84,0.093)(5.54,0.094)
    };
    \addlegendentry{ALP}

\addplot[
    color=blue,
    mark=triangle
    ]
    coordinates {
    (1.1,0.062)(1.95,0.076)(2.71,0.089)(3.43,0.091)(4.14,0.105)(4.84,0.101)(5.54,0.113)
    };
    \addlegendentry{DP Count Sketch}

\end{axis}
\end{tikzpicture}

\begin{tikzpicture}
\begin{axis}[
    title={Avg. Decode Time (ms): $k = 2^{20}$},
    xmin=0.5, xmax=6,
    ymin=0, ymax=0.5,
    xtick={0.5, 1, 1.5, 2, 2.5, 3, 3.5, 4, 4.5, 5, 5.5, 6},
    ytick={0.1, 0.2, 0.3, 0.4, 0.5},
    legend pos=north east,
    ymajorgrids=true,
    grid style=dashed,
] 

\addplot[
    color=red,
    mark=square
    ]
    coordinates {
    (1.1,0.295)(1.95,0.297)(2.71,0.294)(3.43,0.299)(4.14,0.296)(4.84,0.295)(5.54,0.294)
    };
    \addlegendentry{$\sparseAlg$ (Ours)}

\addplot[
    color=green,
    mark=star
    ]
    coordinates {
    (1.1,0.131)(1.95,0.132)(2.71,0.133)(3.43,0.134)(4.14,0.135)(4.84,0.134)(5.54,0.134)
    };
    \addlegendentry{ALP}

\addplot[
    color=blue,
    mark=triangle
    ]
    coordinates {
    (1.1,0.153)(1.95,0.154)(2.71,0.153)(3.43,0.154)(4.14,0.152)(4.84,0.154)(5.54,0.155)
    };
    \addlegendentry{DP Count Sketch}

\end{axis}
\end{tikzpicture}

}

\caption{\label{fig:const_exp}
Comparisons of of $\sparseAlg$, ALP, and DP Count Sketch with $\delta \le 2^{-40}$. The $x$-axis is privacy parameter $\epsilon$ and the $y$-axis is error probability, encoding time (ms) or decoding time (ms).}
\end{figure*}
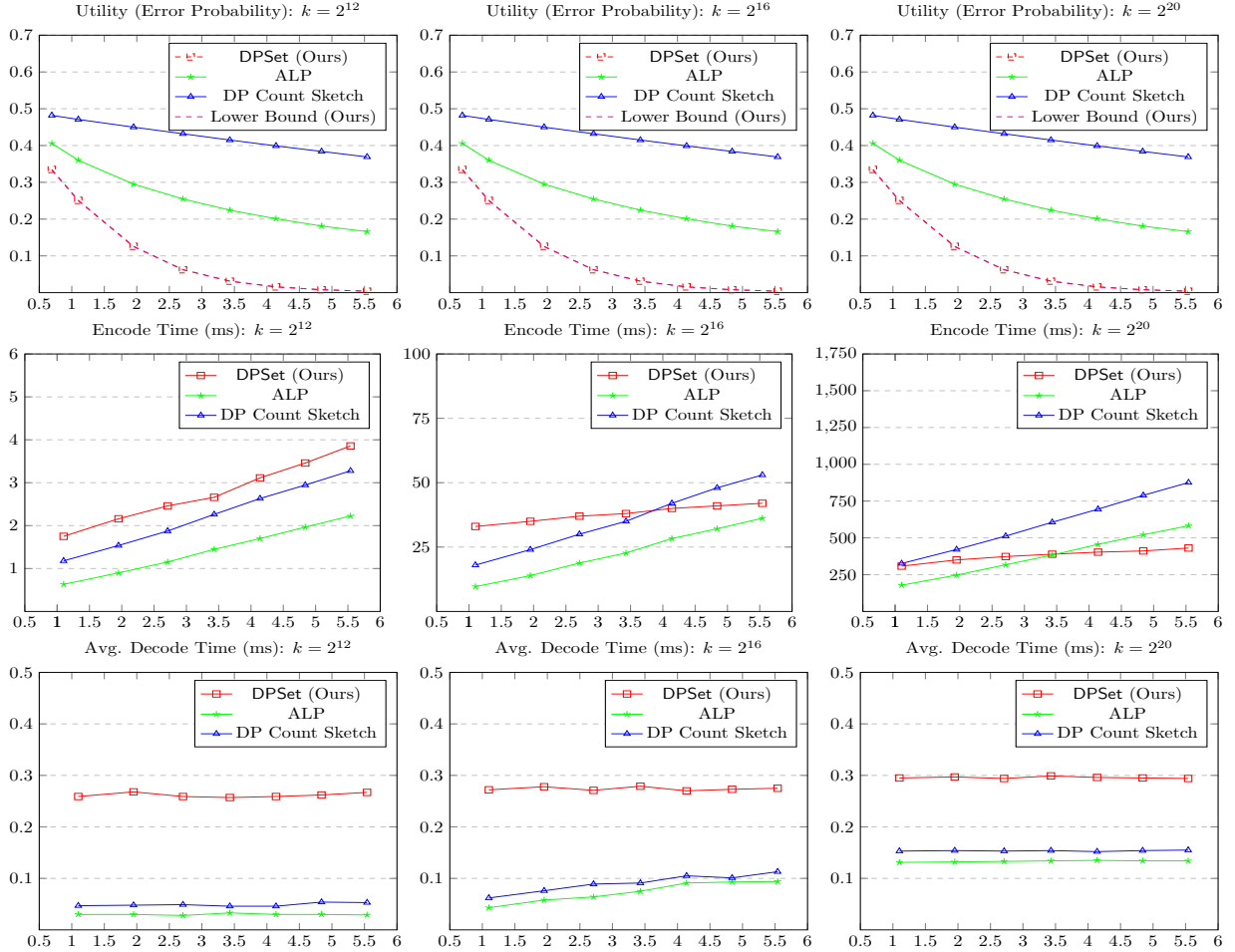

\noindent{\bf Setup.}
We implemented $\sparseAlg$, ALP~\cite{alp} and DP Count Sketch~\cite{zhaoPrivateCountSketch} in C++ using 800 lines of code.
For $\sparseAlg$, we use the analysis of Bienstock {\em et al.}~\cite{okvs} to choose appropriate parameters for $\delta \le 2^{-40}$ with parameter $\beta = 0.05$.
To fit ALP and DP Count Sketch to our problem setting,
we round the query results of these mechanisms to the nearest 0 or 1.
We target privacy parameter $\delta \le 2^{-40}$ for all three constructions.
To fairly compare utility, we chose parameters to ensure that encoding sizes are approximately equal for all three constructions (see Appendix~\ref{app:space} for more details on encoding sizes).

We consider experiments for input sets
of size $k \in \{2^{12}, 2^{16}, 2^{20}\}$.
Each trial picks input sets as $k$ uniformly random $128$-bit strings from a universe
of all $n = 2^{128}$ strings. Although, all three
constructions are agnostic to the distribution
of the input set. 
We ran all experiments using a Ubuntu PC
with 12 cores, 3.7 GHz Intel Xeon W-2135 and 64 GB of RAM. Our experiments enable AVX2 and AVX-512
instruction sets with SIMD instructions.
All reported results use single-thread execution as the average of at least 1,000 trials with standard deviation less
than 10\% of the average. The entire experimental evaluations (including setup) took approximately 1 hour of compute time.

\noindent{\bf Utility.}
To measure utility, we query the entire input set
of size $k$ as well as a random subset of $k$
elements outside of the set in each trial.
We plot our results in Figure~\ref{fig:const_exp}
along with our lower bound (Theorem~\ref{thm:error_lb}).
We see that $\sparseAlg$ has much better utility compared to the prior works. Furthermore, our experiments corroborate our theoretical analysis that error probability exponentially decreases
in $\epsilon$ and essentially matches our lower bound of $\alpha \ge (1-\delta)/(e^\epsilon + 1) \ge (1 - 2^{-40}) / (e^\epsilon + 1)$.

\noindent{\bf Efficiency.}
We compare the efficiency of encoding input sets
and decoding random elements.
For larger input set sizes $k$ and bigger $\epsilon$, our constructions
have faster encoding times. In contrast, $\sparseAlg$ has slower encoding for smaller $k$ and $\epsilon$. For decoding, $\sparseAlg$ has slower times than both prior works. Nevertheless, decoding times of $\sparseAlg$ remain very fast and are less than $0.3$ milliseconds.

\section{Conclusions}
In this work, we present constructions of DP sets 
that are essentially optimal in privacy-utility and space trade-offs
nearly matching our lower bounds.
The error obtained is exponentially smaller (both theoretically and empirically) than possible
for private histograms mostly studied in prior works.
Additionally, we experimentally show that our constructions are concretely efficient. 

\noindent{\bf Limitations.} A limitation of our work is that we consider sparse sets (as opposed to the more general sparse histograms). Nevertheless, we believe this specific problem has several important applications with the added benefit of exponentially smaller error. Our constructions assume fully random hash functions (following several prior works) and instantiations are limited to finite field sizes. If we assume pseudorandom hash functions (PRFs), our construction obtains computational DP instead.

\section{Acknowledgements}

The authors would like to thank Rachel Cummings for feedback on earlier versions of this paper. The last author was partially supported by NSF grant CCF-2312242.

{\small
\bibliographystyle{plain}
\bibliography{abbrev3,crypto,biblio}
}

\appendix
\section{Proof of correctness of $\sparseAlg$}
\label{app:correctness_proof}

We prove the correctness of our constructions. First, we present a lemma about the false positive and false negative rates. Afterwards, we use this to get a final error probability.

\begin{lemma}
\label{lem:correct}

$\sparseAlg.\decode$ has false positive probability of $|\mathbb{F}|^{-1}$ and false negative probability of $p \cdot (1 - |\mathbb{F}|^{-1})$. Also, the probabilities are independent for each $q \in U$.
\end{lemma}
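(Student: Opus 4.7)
The plan is to compute each error probability by conditioning on whether the query $q$ survives the subsampling step that produces $S'$, and to exploit the fact that $h$ is a fully random hash function so that $h(q)$ is a uniformly random element of $\mathbb{F}$ whenever $q$ is not used to build $\bx$.

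First I would handle false positives. Fix any $q \in U \setminus S$. Then $q \notin S'$ deterministically, so $\bx$ is a function of $\row$ and of the values $h(s)$ for $s \in S'$, none of which involves $h(q)$. Because $h$ is fully random, $h(q)$ is uniform on $\mathbb{F}$ and independent of the pair $(\row(q), \bx)$. Thus
\[
\Pr[\sparseAlg.\decode(\hat S,q)=1] \;=\; \Pr[h(q)=\row(q)\cdot \bx] \;=\; |\mathbb{F}|^{-1},
\]
which is exactly the claimed false positive rate.

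Next I would handle false negatives. Fix $q \in S$ and condition on whether $q$ survives subsampling. With probability $1-p$, $q \in S'$, in which case the linear system explicitly enforces $\row(q)\cdot \bx = h(q)$ (assuming $\solve$ did not return $\perp$; the lemma implicitly conditions on successful encoding, since the fallback branch just stores $S$ directly and answers correctly on $q \in S$). With probability $p$, $q$ was excluded from $S'$, so by the same argument as in the false positive case $h(q)$ is uniform and independent of $(\row(q),\bx)$, yielding a (incorrect) decode of $1$ with probability $|\mathbb{F}|^{-1}$, i.e.\ a false negative with probability $1 - |\mathbb{F}|^{-1}$. Combining,
\[
\Pr[\sparseAlg.\decode(\hat S,q)=0] \;=\; p \cdot \bigl(1 - |\mathbb{F}|^{-1}\bigr),
\]
as required.

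Finally I would argue independence across queries $q_1,\ldots,q_t \in U$. The only point where the randomness of the queries enters is through the values $\{h(q_i)\}_i$ and, when $\row$ is itself randomized, the rows $\{\row(q_i)\}_i$. For indices $i$ with $q_i \notin S'$, the value $h(q_i)$ is independent of $(\row, \bx, \{h(q_j)\}_{j\ne i})$ and hence the indicator $\bOne_{h(q_i)=\row(q_i)\cdot \bx}$ is, conditioned on $(\row,\bx)$, independent across such $i$. For indices with $q_i \in S'$ the answer is deterministically correct conditioned on the encoding. Independence of the subsampling decisions (each $s\in S$ is kept independently with probability $1-p$) then lets me combine the two cases to conclude that the per-query error events are mutually independent. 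The main obstacle to watch for is the subtle dependence of $\bx$ on the hash values used in encoding; the proof works because conditioning on the full joint randomness $(\row, \{h(s)\}_{s\in S'})$ leaves the hashes $h(q)$ for $q$ outside $S'$ still uniform and independent.
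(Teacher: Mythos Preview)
Your proposal is correct and follows essentially the same approach as the paper's proof: compute the false positive rate directly from the uniform randomness of $h(q)$ when $q\notin S'$, handle false negatives by conditioning on whether $q$ survives the subsampling, and derive independence from the independence of the subsampling coins together with full randomness of $h$. Your write-up is, if anything, more careful about the independence argument and the fallback branch than the paper's own proof.
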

\begin{proof}
First consider false positives. If $u \notin S$, then $u \notin S'$. We know $\Pr[\row(u) \cdot \bx = h(u)] = |\mathbb{F}|^{-1}$ since $h$ is a fully random hash function. For false negatives, consider any $s \in S$. Note, that $\Pr[s \notin S' \mid s \in S] = (1-p)$.
Since $\sparseAlg.\decode$ only returns $0$ if $\bx =\ \perp$, there is no additional error from $\sparseAlg.\encode$ failing.
If $s \notin S'$, the decoding will return $1$ with probability $|\mathbb{F}|^{-1}$. So, the false negative probability is $s$ not sampled into $S'$ and the linear constraint being unsatisfied that is $p \cdot (1 - |\mathbb{F}|^{-1})$.
Finally, these probabilities are independent for every $q \in U$ following from the fact that sampling each element into $S'$ is independent and that $h$ is a fully random hash function.
\end{proof}

The error probability is the maximum of the false positive and negative probabilities. Suppose we desired a certain error probability $\alpha$, we can pick the field size and exclusion probability as follows.

\begin{proof}[Proof of Theorem~\ref{thm:correct}]
First, we see that $\alpha = |\mathbb{F}|^{-1}$ for the false positives. Then, we pick $p$ satisfying $\alpha = p \cdot (1 - \alpha)$ for false negatives to see that $p = \alpha/(1-\alpha)$.
\end{proof}
\section{Proof of privacy of $\sparseAlg$}
\label{app:privacy_proof}

In this section, we present the full proof of Theorem~\ref{thm:privacy_main}. In particular, we show that it follows directly from the following theorem by plugging in $\epsilon$ accordingly. In our proof, we require the randomness of $h$ only for correctness and not privacy. 

\begin{theorem}
\label{thm:sparseAlgPrivacy}
Suppose $|\mathbb{F}| = \alpha^{-1}$, $p = \alpha/(1-\alpha)$ and $\solve$ fails with probability $f_\solve$ for correctly generated $\row$. Then, $\sparseAlg$ is $(\ln(\frac{1 - \alpha}{\alpha}), f_\solve)$-DP.
\end{theorem}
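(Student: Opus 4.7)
The plan is to analyze the output distribution of $\sparseAlg$ conditional on the hash functions $\row$ and $h$, comparing the conditional probability of each fixed non-failure output $\bx^*$ under neighboring inputs $S_0$ and $S_1 = S_0 \cup \{u\}$. Since $\row$ and $h$ are sampled independently of the input and appear in the output, the DP analysis reduces to bounding the pointwise likelihood ratio
\[
\Pr[\bx = \bx^* \mid \row, h, S_1] / \Pr[\bx = \bx^* \mid \row, h, S_0]
\]
for every $(\row, h, \bx^*)$, and then absorbing the failure branch into $\delta$.

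First, I would write $\Pr[\bx^* \mid \row, h, S]$ as a sum over the randomly sampled subset $T \subseteq S$. Whenever $\solve$ succeeds on $T$, the solution $\bx$ is uniform over the affine subspace $\mathcal{X}(T) = \{\bx : \bM_T \bx = \bb_T\}$, which has size $|\mathbb{F}|^{m-|T|}$ under full row rank. Defining $A(\bx^*) = \{s \in U : \row(s)\bx^* = h(s)\}$, the condition $\bx^* \in \mathcal{X}(T)$ translates to $T \subseteq A(\bx^*)$, so the sum only runs over $T \subseteq S \cap A(\bx^*)$, with each term weighted by the sampling probability $(1-p)^{|T|} p^{|S|-|T|}$ and the reciprocal of $|\mathcal{X}(T)|$.

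The crux of the argument is to split the $S_1$-sum into terms with $u \notin T$ (which coincide with the $S_0$-sum up to an extra factor of $p$ from the sampling probability) and terms with $u \in T$ (which, after reindexing $T = T' \cup \{u\}$, are nonempty only if $u \in A(\bx^*)$ and contribute at most $(1-p)|\mathbb{F}|$ times the corresponding $S_0$-summand). The factor $(1-p)|\mathbb{F}|$ arises because adding one row to the linear system shrinks $|\mathcal{X}|$ by a factor of $|\mathbb{F}|$ whenever $\solve$ still succeeds; the inequality rather than equality is needed because $\solve$ may succeed on $T'$ but fail on $T' \cup \{u\}$. Combining, the $S_1$-to-$S_0$ ratio is at most $p + (1-p)|\mathbb{F}|$, which with the chosen parameters simplifies to $(1-\alpha)/\alpha = e^\epsilon$; dropping the $u \in T$ contribution instead gives $\Pr[\bx^* \mid \row, h, S_1] \ge p \cdot \Pr[\bx^* \mid \row, h, S_0]$, so the reverse direction also holds with the same $\epsilon$.

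Finally, the $\delta = f_\solve$ slack handles the failure branch: when $\solve$ fails, $\sparseAlg.\encode$ outputs $(\perp, \perp, \perp, S)$, which reveals $S$ directly, but this happens with probability at most $f_\solve$. Partitioning any event $R$ into success and failure outputs and applying the pointwise ratio bound to the success part yields the claimed $(\ln((1-\alpha)/\alpha), f_\solve)$-DP guarantee. The main technical obstacle is the index-set mismatch in the ``$u \in A(\bx^*)$'' case, since $\solve(T' \cup \{u\})$ may fail while $\solve(T')$ succeeds; fortunately this only shrinks the $u \in T$ contribution to $\Pr[\bx^* \mid \row, h, S_1]$, so the one-sided comparison is exactly what the DP upper bound requires, and the small calculation with $|\mathbb{F}| = 1/\alpha$ and $p = \alpha/(1-\alpha)$ collapses the algebra cleanly.
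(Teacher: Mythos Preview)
Your proposal is correct and follows essentially the same route as the paper: both arguments decompose the output distribution according to whether the distinguishing element $u$ survives the subsampling step, use that the ``$u$ excluded'' branch contributes exactly $p$ times the $S_0$ probability while the ``$u$ included'' branch contributes at most $(1-p)|\mathbb{F}|$ times it (via the factor-$|\mathbb{F}|$ shrinkage of the solution space from one extra independent constraint), and then verify $p + (1-p)|\mathbb{F}| = (1-\alpha)/\alpha$ under the stated parameters. Your explicit sum over sampled subsets $T$ and the set $A(\bx^*)$ is just a slightly more combinatorial packaging of the paper's conditioning on $R_u$ and then on $\bSp$, and your observation that $\solve$ failing on $T'\cup\{u\}$ only helps the upper bound is exactly the paper's ``if the LHS is $0$ the bound holds trivially'' remark.
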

\begin{proof}
Let $S_1$ and $S_2$ be the two neighboring sets such that $S_2 = S_1 \cup \{u\}$. Let $\bZ_{S_1}$ and $\bZ_{S_2}$ be the two random variables denoting the representations output by $\sparseAlg.\encode$ for $S_1$ and $S_2$, respectively. Let $m$ be the number of variables (length of the encoded vector) in Algorithm~\ref{alg:subset_encode}. Let $\mathbf{x} \in \mathbb{F}^m$, $\row$ and $h$ be arbitrary, and let $\mathbf{v} = (\mathbf{x}, \row, h, \perp)$. We first show that
\begin{equation}
\label{eqn:dp1}
\Pr[\bZ_{S_1} = \mathbf{v}] \leq \left(\frac{1 - \alpha}{\alpha}\right) \cdot \Pr[\bZ_{S_2} = \mathbf{v}]
\end{equation}
\begin{equation}
\label{eqn:dp2}
\Pr[\bZ_{S_2} = \mathbf{v}] \leq \left(\frac{1 - \alpha}{\alpha}\right) \cdot \Pr[\bZ_{S_1} = \mathbf{v}]
\end{equation}
where the probability is over the random coin tosses performed by $\sparseAlg.\encode$.

We first prove Equation~\ref{eqn:dp1}. Let $R_u$ be the event where the element $u$ is removed during $\sparseAlg.\encode$ on $S_2$. Recalling that $\Pr[R_u] = p = \frac{\alpha}{1 - \alpha}$ from Algorithm~\ref{alg:subset_encode}, we have
\begin{align*}
\Pr[&\bZ_{S_2} = \mathbf{v}]\\ 
={}& \Pr[R_u]\Pr[\bZ_{S_2} = \mathbf{v} \mid R_u]
  + \Pr[\overline{R_u}]\Pr[\bZ_{S_2} = \mathbf{v} \mid \overline{R_u}] \\
={}& \Pr[R_u]\Pr[\bZ_{S_1} = \mathbf{v}]
+ \Pr[\overline{R_u}]\Pr[\bZ_{S_2} = \mathbf{v} \mid \overline{R_u}] \\
\geq{}& \Pr[R_u]\Pr[\bZ_{S_1} = \mathbf{v}]
= \frac{\alpha}{1 - \alpha}\Pr[\bZ_{S_1} = \mathbf{v}]
\end{align*}
where the second equality follows from the fact that the distribution of $\bZ_{S_2}$ is identical to $\bZ_{S_1}$ conditioned on the event $R_u$. Rearranging, we get the desired bound.

Next, we prove Equation~\ref{eqn:dp2}.
By again decomposing $\Pr[\bZ_{S_2} = \mathbf{v}]$ conditioned on $R_u$, we have
\begin{align*}
\Pr[&\bZ_{S_2} = \mathbf{v}]\\
=&{} \Pr[R_u]\Pr[\bZ_{S_2} = \mathbf{v} \mid R_u] + \Pr[\overline{R_u}]\Pr[\bZ_{S_2} = \mathbf{v} \mid \overline{R_u}] \\
=&{} \Pr[R_u]\Pr[\bZ_{S_1} = \mathbf{v}] + \Pr[\overline{R_u}]\Pr[\bZ_{S_2} = \mathbf{v} \mid \overline{R_u}].
\end{align*}
Before proceeding with the proof, we claim that 
\begin{equation}
\label{eqn:bound1}
\Pr[\bZ_{S_2} = \mathbf{v} \mid \overline{R_u}] \leq \alpha^{-1} \Pr[\bZ_{S_1} = \mathbf{v}].
\end{equation}
Then plugging in Equation~\ref{eqn:bound1} to the above equation, we get
\begin{align*}
\Pr[&R_u]\Pr[\bZ_{S_1} = \mathbf{v}] + \Pr[\overline{R_u}]\Pr[\bZ_{S_2} = \mathbf{v} \mid \overline{R_u}] \\
&\leq \Pr[R_u]\Pr[\bZ_{S_1} = \mathbf{v}] + \Pr[\overline{R_u}](\alpha^{-1} \Pr[\bZ_{S_1} = \mathbf{v}]) \\
&\leq(\frac{\alpha}{1 - \alpha} + \frac{1 - 2\alpha}{\alpha - \alpha^2})\Pr[\bZ_{S_1} = \mathbf{v}] \\
&=\frac{1 - \alpha}{\alpha}\Pr[\bZ_{S_1} = \mathbf{v}].
\end{align*}
We now prove Equation~\ref{eqn:bound1} to complete the proof. Let $\bSp_1$ and $\bSp_2$ be random variables denoting the set of elements that survived the removal process in $\sparseAlg.\encode$ for input $S_1$ and $S_2$, respectively. Consider an arbitrary subset $S' \subseteq S_1$. We can see that $\Pr[\bSp_2 = S' \cup \{u\} \mid \overline{R_u}] = \Pr[\bSp_1 = S']$, and so if we show that

\begin{equation}
\label{eqn:bound2}
\Pr[\bZ_{S_2} = \mathbf{v} \mid \overline{R_u} \cap (\bSp_2 = S' \cup \{u\})]
\leq \alpha^{-1} \Pr[\bZ_{S_1} = \mathbf{v} \mid \bSp_1 = S']
\end{equation}

then we can apply the law of total probability to obtain Equation~\ref{eqn:bound1}.

One way to see why Equation~\ref{eqn:bound2} holds is as follows. If the left hand side is $0$, then the bound trivially holds, so we may assume that the probability is positive. 
For any choice of hash functions (that also determine $\row$), the linear systems generated are uniquely determined by the surviving elements. Thus, conditioned on the event that $\bSp_1 = S'$ and $\bSp_2 = S' \cup \{u\}$, the generated linear systems are deterministic. Let $L_1$ and $L_2$ be the corresponding linear systems for $S_1$ and $S_2$, respectively. From the condition, it must be that $L_1 \subset L_2$ and $L_2$ has exactly one more equation than $L_1$ that is linearly independent of the other rows. In other words, $L_1$ has exactly one more degree of freedom than $L_2$, which corresponds to an extra free variable. By the construction of Algorithm~\ref{alg:subset_encode}, the free variables are independently and uniformly randomly set to values in $\mathbb{F}$. Thus, the probability that this random free variable is set to the corresponding value in $\mathbf{x}$ (the first component of $\mathbf{v}$) is exactly $1 / |\mathbb{F}| = \alpha$. This establishes the inequality (in fact an equality) and completes the proof of Equation~\ref{eqn:bound2}, from which Equation~\ref{eqn:dp2} follows immediately.

From Equation~\ref{eqn:dp1} and Equation~\ref{eqn:dp2}, the proof of the main theorem follows immediately by applying Definition~\ref{def:dp} and using the failure probability of $\solve$ is at most $\delta = f_\solve$.
\end{proof}

\section{Pure differentially private subsets}
\label{sec:pure_const_appendix}

As discussed in Section~\ref{sec:pure_const}, to obtain a pure DP construction, our goal is to construct a linear system that is full rank with probability 1. To achieve this goal, we will use the {\em Vandermonde matrix} construction. Vandermonde matrix is a $n \times k$ matrix of the form 
\[
\begin{bmatrix}
1 & u_1 & \ldots & u_1^{k - 2} & u_1^{k - 1} \\
1 & u_2 & \ldots & u_2^{k - 2} & u_2^{k - 1} \\
& & \vdots \\
1 & u_n & \ldots & u_n^{k - 2} & u_n^{k - 1}
\end{bmatrix}
\]
where each $u_i \in \mathbb{F}$. If $n \leq k$ then this matrix is always full rank for any set of distinct $u_i$. 

Suppose that the universe $U = \mathbb{F}$ for some finite field $\mathbb{F}$. Let $S$ be the input set and let $k = |S|$. Then we can construct the matrix in Algorithm~\ref{alg:subset_encode} using the Vandermonde matrix construction to obtain a pure differentially private construction. The algorithm for constructing the row vector is presented in Algorithm~\ref{alg:vandermonde_row}.
Using \cite{BM}, the linear system constructed using the Vandermonde matrix can be solved in $O(k \log^2 k)$ time.
We point to the prior work to find the corresponding $\solve_{\mathsf{Vandermonde}}$.
Plugging this into the framework of Section~\ref{sec:lin_sys_framework}, we immediately obtain Theorem~\ref{thm:pure_dp}.

\begin{algorithm}[t]
\caption{$\row_{\mathsf{Vandermonde}}$ algorithm}
\label{alg:vandermonde_row}
\begin{algorithmic}
\Require $u$: element $u \in U = \mathbb{F}$
\Ensure $\mathbf{v}:$ Vandermonde matrix row
\State $k \leftarrow |S|,$ size of the input set
\Return $\mathbf{v} \leftarrow [1, u, u^2, \ldots, u^{k - 1}]$ 
\end{algorithmic}\end{algorithm}

\section{Proof of lower bounds}
\label{app:lb_proof}
We start by proving our lower bound of utility
stated in Theorem~\ref{thm:error_lb}.

\begin{proof}[Proof of Theorem~\ref{thm:error_lb}]
Pick any $\bx$ and $\by$ that differ in exactly one entry. Without loss of generality, pick the unique index $i \in [n]$ such that $\bx[i] = 0$
and $\by[i] = 1$. Let $\bZ_\bx$ and $\bZ_\by$ be the random variables
denoting the representations output by $\Pi$ for $\bx$ and $\by$ respectively.
We will consider the probability that $\Pi$ produces a representation such that $\Pi.\decode$ outputs $1$ on index $i$ for
each of $\bZ_\bx$ and $\bZ_\by$.
Note that $\Pr[\Pi.\decode(\bZ_y, i) = 1] \ge 1- \alpha$ since $\by[i] = 1$.
Similarly, we note that $\Pr[\Pi.\decode(\bZ_x, i) = 1] \le \alpha$ since $\bx[i] = 0$.
In other words, we see
\begin{align*}
1-\alpha &\le \Pr[\Pi.\decode(\bZ_y, i) = 1] \\
&\le e^\epsilon \Pr[\Pi.\decode(\bZ_x, i) = 1] + \delta\\ &\le e^\epsilon \alpha + \delta.
\end{align*}
By re-arranging the inequality $1-\alpha \le e^\epsilon \alpha + \delta$, we get the desired theorem.
\end{proof}

To prove our space lower bound stated in Theorem~\ref{thm:space_lb}, we start with proving an intermediate result about the required space
for any mechanism (not necessarily differentially private) that has error probability at most $\alpha$.
In particular, the existence of such a mechanism enables a very efficient compression algorithm
to encode random vectors $\bx$ with $k$ non-zero entries.

\begin{lemma}
\label{lem:space_lb}
Consider any mechanism $\Pi$ for binary vectors $\bx \in \{0,1\}^n$ with at most $k$ non-zero entries, $|\bx|_1 \le k$. If $\Pi$ produces representations using $s$ bits of space in expectation and has error probability at most $0 < \alpha \le 1/2$, then
\[
\E[s] \ge (1-2\alpha)k \cdot \log\left(\frac{1}{\alpha} - 1\right) - 2 \log k - \log\log(en/k).
\]
\end{lemma}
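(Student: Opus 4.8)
The plan is to use $\Pi$ as a one-way compressor for a random sparse vector and then lower-bound the expected code length by the entropy of what we are encoding. Fix a random variable $\bx$ drawn uniformly from the set of all $\bx \in \{0,1\}^n$ with exactly $k$ non-zero entries (this is the worst case, and it is the hardest distribution to compress: its entropy is $\log \binom{n}{k} \ge k \log(n/k)$). The encoder runs $\bZ \leftarrow \Pi.\encode(\bx)$ and also records, using the randomness it shares with the decoder (or explicitly, paying a negligible price), which coordinates $\Pi.\decode(\bZ, \cdot)$ classifies incorrectly. Concretely I would have the encoder output the pair $(\bZ, \bw)$, where $\bw$ is the ``correction'' describing the symmetric difference between the true support of $\bx$ and the set $\{i : \Pi.\decode(\bZ, i) = 1\}$. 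Given $(\bZ, \bw)$ the decoder recovers $\bx$ exactly: it evaluates $\Pi.\decode(\bZ, i)$ for all $i$, then flips the coordinates indicated by $\bw$.

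The key step is bounding the expected number of bits needed for $\bw$. By the error-probability hypothesis (Definition~\ref{def:utility}, used with singleton query sets), each fixed coordinate $i$ is misclassified with probability at most $\alpha$, independently across $i$. So the expected number of misclassified coordinates is at most $\alpha n$; but more usefully, among the $k$ support coordinates the expected number of false negatives is at most $\alpha k$, and among the $n-k$ zero coordinates the expected number of false positives is at most $\alpha(n-k)$. The correction $\bw$ must identify both sets. I would encode $\bw$ by listing, for each of the $k$ true-support positions, one bit saying whether it was a false negative (costing $k$ bits, or better, $k \cdot H(\alpha)$ bits via an entropy code since each bit is $\mathrm{Bernoulli}(\le\alpha)$), plus an index into $[n]$ for each false positive; the false-positive part costs in expectation at most $\alpha(n-k) \cdot \log n \le \alpha n \log n$ bits naively, which is too much. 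The right fix is to note we do not need to name false positives among all of $[n]$: instead, the decoder already knows the $n-k$ candidate zero-coordinates, and we encode which subset of them flipped using a length-$(n-k)$ indicator string compressed to $(n-k) H(\alpha) \le n H(\alpha)$ bits — but this is still $\Theta(n)$, not $\Theta(k)$. To get the stated bound, one must instead observe that the encoder can first decide to output $(\perp, \bx)$ verbatim — costing $\log\binom{n}{k}$ bits — whenever the number of errors is atypically large, which happens with small probability; and otherwise pay only for a bounded number of errors. This is where the $\log\binom{n}{k}$ alternative in Theorem~\ref{thm:space_lb} comes from, and why the $\min$ appears.

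Assembling the inequality: Shannon's source coding theorem gives $\E[|\text{encoding}|] \ge H(\bx) = \log\binom{n}{k} \ge k\log(n/k) \ge k\log(k/(en/k))$... more carefully, $\log\binom{n}{k} \ge k \log(n/k) \ge k\log k - k\log\log(en/k)$ after a standard estimate. Since $\E[|\text{encoding}|] \le \E[s] + \E[|\bw|]$, rearranging yields $\E[s] \ge \log\binom{n}{k} - \E[|\bw|]$. Plugging in the bound $\E[|\bw|] \le k H(\alpha) + (\text{false-positive cost})$ and using $H(\alpha) \le \alpha \log(1/\alpha) + \alpha\log e$ together with the observation that the ``excess'' false positives contribute at most $2\alpha k \log(1/\alpha)$ worth of slack once we have conditioned on the typical event (formalized via a Markov/Chernoff-type argument on the number of false positives, which concentrates around $\alpha(n-k)$ but whose \emph{informative} content relative to the decoder's knowledge is only about the support), gives
\[
\E[s] \ge \log\binom{n}{k} - k H(\alpha) - \text{(lower-order terms)} \ge (1-2\alpha)k\log\!\left(\tfrac{1}{\alpha}-1\right) - 2\log k - \log\log(en/k),
\]
where the algebra $\log\binom{n}{k} - kH(\alpha) \ge (1-2\alpha)k\log(1/\alpha - 1) - \ldots$ is the routine calculation I would defer.

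\textbf{Main obstacle.} The genuine difficulty is handling false positives economically: there are $\Theta(\alpha n)$ of them in expectation, and naively naming them costs $\Theta(\alpha n \log n)$ bits, swamping the $\Theta(k\log(1/\alpha))$ target. The resolution — and the delicate part of the proof — is to recognize that the decoder, after running $\Pi.\decode$, already possesses a candidate set, and the \emph{true} support has size exactly $k$; so the encoder only ever needs to communicate $O(k)$ positions' worth of discrepancy \emph{on the support side}, while the false-positive side is absorbed either by the entropy-coded length-$(n-k)$ indicator (handled by arguing its contribution beyond $kH(\alpha)$-type terms is lower order relative to $\log\binom{n}{k}$) or by falling back to the trivial $\log\binom{n}{k}$-bit encoding on the rare high-error event. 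Getting this bookkeeping right, and in particular pinning down exactly which terms are $O(\log k)$ versus $O(k)$, is where the care goes; the rest is Shannon's bound plus binomial estimates.
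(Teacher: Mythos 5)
Your high-level plan — use $\Pi$ as a one-way compressor for a uniformly random weight-$k$ vector, then invoke Shannon's source coding theorem — is exactly the paper's strategy, and your setup (shared randomness $\calR$, sending $\bZ$ plus a ``correction'') is on the right track. But the specific correction encoding you propose has a genuine gap, and you correctly diagnose it yourself in your ``main obstacle'' paragraph without resolving it.

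The problem is that you try to communicate the symmetric difference $\bw$ between the true support $X$ and the decoded-positive set. There are $\Theta(\alpha n)$ false positives in expectation, and neither of your workarounds closes the gap: an entropy-coded length-$(n-k)$ indicator still costs $\Theta(n H(\alpha))$ bits, which is $\Theta(\alpha n \log(1/\alpha))$ and swamps the $\Theta(k\log(1/\alpha))$ target whenever $n \gg k$; and the fallback-to-trivial-encoding idea for ``atypical'' error counts does not help, because the false-positive count is concentrated near its mean $\approx \alpha(n-k)$, so the high-error event is not rare — the \emph{typical} case already has too many false positives to name. The key idea you are missing is that \emph{you never need to communicate the false positives at all}. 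Let $A$ and $B$ be the sets of indices that $\Pi.\decode(\bZ,\cdot)$ maps to $0$ and $1$ respectively; Bob can recompute $A$ and $B$ himself from $\bZ$ and $\calR$. Alice then sends (i) $|X\cap A|$ in $\log k$ bits, (ii) $X\cap A$ as a subset of $A$ in $\log\binom{|A|}{|X\cap A|}$ bits, and (iii) $X\cap B$ as a subset of $B$ in $\log\binom{|B|}{|X\cap B|}$ bits — this pins down $X$ exactly. The false positives are never named; they are simply the part of $B$ that Alice's subset encoding in (iii) does \emph{not} select. Since $|B|\approx \alpha n$ and $|X\cap B|\approx (1-\alpha)k$ while $|A|\approx(1-\alpha)n$ and $|X\cap A|\approx\alpha k$, the total correction cost is $\log\binom{(1-\alpha)n}{\alpha k}+\log\binom{\alpha n}{(1-\alpha)k}$ (after an application of Jensen's inequality using log-concavity of $\binom{a}{b}$), and subtracting this from $\log\binom{n}{k}$ makes the $\log(n/k)$ bulk cancel, leaving the $(1-2\alpha)k\log\!\left(\frac{1-\alpha}{\alpha}\right)$ term plus the $O(\log k + \log\log(en/k))$ overhead from the length prefix and the $|X\cap A|$ counter. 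Your reading of the $\min(\cdot,\log\binom{n}{k})$ in the theorem as a high-error fallback is also off: in the paper it comes from the WLOG normalization that $\Pi$ never emits an encoding longer than $\log\binom{n}{k}$ bits (one can always substitute the trivial exact encoding), which is what caps the length-prefix cost at $\log\log\binom{n}{k}$, not from conditioning on a rare bad event.
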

\begin{proof}
We will make the assumption that $\Pi$ never produces representations larger than $\log{n \choose k}$ bits on any input and any choice of randomness. Note, this is without loss of generality because a trivial representation of binary vectors with $k$ non-zero entries
can be done in $\log{n \choose k}$ bits with zero error probability. If $\Pi$ violates this assumption, we can
modify $\Pi$ to replace any longer encodings with the trivial one that will either maintain or decrease the space usage and error probability.

We consider the following two-party, one-way compression problem
between an encoder (Alice) and a decoder (Bob). As input, Alice receives as input a uniformly random vector $\bx \in \{0,1\}^n$ conditioned that exactly
$k$ entries are non-zero, $|\bx|_1 = k$. Alice's job is to encode $\bx$ into a single message enabling Bob to correctly decode $\bx$. In particular,
Alice's goal is to make the message as small as possible. To do this,
Alice will utilize the mechanism $\Pi$. At a high level, Alice will use $\Pi$ to construct a representation $\bx$ with error probability $\alpha$.
Additionally, Alice will send some auxiliary information that will enable
Bob to correctly identify the non-zero entries of $\bx$ using the answers
of $\Pi$. We present the compression algorithm below.

\medskip\noindent{\bf Alice's Encoding}: Receives $\bx \in \{0,1\}^n$ such that $|\bx|_1 = k$ and shared randomness $\calR$.
\begin{enumerate}
    \item Construct $\bZ \leftarrow \Pi.\encode(\bx; \calR)$ using randomness $\calR$.
    \item Set $X = \{i \in [n] \mid \bx[i] = 1\}$.
    \item Initialize $A \leftarrow \emptyset$ and $B \leftarrow \emptyset$.
    \item For all $i \in [n]$:
    \begin{enumerate}
        \item If $\Pi.\decode(\bZ, i; \calR) = 0$, set $A \leftarrow A \cup \{i\}$.
        \item Else when $\Pi.\decode(\bZ, i; \calR) = 1$, set $B \leftarrow B \cup \{i\}$.
    \end{enumerate}
    \item Encode $|\bZ|$ using $\log\log{n \choose k}$ bits.
    \item Encode $|X \cap A|$ using $\log k$ bits.
    \item Encode $X \cap A$ using $\log{|A| \choose |X \cap A|}$ bits.
    \item Encode $X \cap B$ using $\log{|B| \choose |X \cap B|}$ bits.
    \item Compute encoding $E = (|\bZ|, \bZ, |X \cap A|, X \cap A, X \cap B)$.
\end{enumerate}

\medskip\noindent{\bf Bob's Decoding}: Receives Alice's encoding and shared randomness $\calR$.
\begin{enumerate}
    \item Decode $|\bZ|$ using the first $\log\log{n \choose k}$ bits and
    $\bZ$ using the next $|\bZ|$ bits.
    \item Initialize $A \leftarrow \emptyset$ and $B \leftarrow \emptyset$.
    \item For all $i \in [n]$:
    \begin{enumerate}
        \item If $\Pi.\decode(\bZ, i; \calR) = 0$, set $A \leftarrow A \cup \{i\}$.
        \item Else when $\Pi.\decode(\bZ, i; \calR) = 1$, set $B \leftarrow B \cup \{i\}$.
    \end{enumerate}
    \item Decode the size of $|X \cap A|$ using the next $\log k$. Additionally, we know that $|X \cap B| = k - |X \cap A|$.
    \item Using knowledge of $|A|, |B|, |X \cap A|$ and $|X \cap B|$, decode $X \cap A$ and $X \cap B$.
    \item Using knowledge of $A$ and $B$ as well as $X \cap A$ and $X \cap B$, we can decode $X$ and, thus, $\bx$.
\end{enumerate}

\medskip\noindent{\bf Prefix-freeness.} We will later apply Shannon's source coding theorem. However, to do this, it is required that Alice's encoding algorithm is prefix-free. That is, any possible encoding cannot be a strict prefix of any other possible encoding.
First, we note that the last components $|X \cap A|$, $X \cap A$ and $X \cap B$ will always be the same length. We use a fixed length to represent $|X \cap A|$. The two sets $X \cap A$ and $X \cap B$ always encode exactly $k$ elements. Therefore, the encoding length will only be different for various sizes of $\bZ$.
However, we prefix each encoding with the length $|\bZ|$. Therefore, any encodings of different lengths (meaning different length $|\bZ|$) will be prefix-free. 
Finally, it is clear that any set of equal length encodings will be prefix-free.

\medskip\noindent{\bf Encoding length.}
The expected length of Alice's encoding is exactly

\[
\log\log{n \choose k} + \E[s] + \log k + \E\left[\log{|A| \choose |X \cap A|} + \log{|B| \choose |X \cap B|}\right]
\]

as we consider expected space usage $s$ and all of $A, B, X \cap A$ and $X \cap B$ are random variables.
Next, we note that the function $f(a, b) \rightarrow {a \choose b}$ is log-concave for the relevant range $a \ge b \ge 0$ (see~\cite{butler1990q} for example). Therefore, we can apply Jensen's inequality to obtain

\[
\E\left[\log{|A| \choose |X \cap A|} + \log{|B| \choose |X \cap B|}\right] \le \log{\E[|A|] \choose \E[|X \cap A|]} + \log{\E[|B|] \choose \E[|X \cap B|]}.
\]

Next, we know that $|X \cap A| + |X \cap B| = k$ and $|A| + |B| = n$. Therefore, we can rewrite

\[
\log{\E[|A|] \choose \E[|X \cap A|]} + \log{\E[|B|] \choose \E[|X \cap B|]} = \log{\E[|A|] \choose \E[|X \cap A|]} + \log{n - \E[|A|] \choose k - \E[|X \cap A|]}.
\]

Next, we see that $\E[|A|] \le (1-\alpha)n$ and $\E[|X \cap A|] \le \alpha k$. Given that $\alpha \le 1/2$, we immediately see that this is maximized when $\E[|A|] = (1-\alpha)n$ and $\E[|X\cap A|] = \alpha k$. So,
we see that

\[
\log{\E[|A|] \choose \E[|X \cap A|]} + \log{n - \E[|A|] \choose k - \E[|X \cap A|]} \\
\le \log{(1-\alpha)n \choose \alpha k} + \log{\alpha n \choose (1-\alpha)k}.
\]

\medskip\noindent{\bf Complete the lower bound.} To finally complete the proof of the lower bound, we will apply Shannon's source coding theorem~\cite{shannon1948mathematical} that states that the expected length of Alice's prefix-free encoding cannot be smaller than the entropy of Alice's input conditioned on any shared input.
First, we see that
\[
H(\bx \mid \calR) = H(\bx) = \log{n \choose k}.
\]
Therefore, we get that Alice's expected encoding length must satisfy

\[
\log\log{n \choose k} + \E[s] + \log k + \log{(1-\alpha)n \choose \alpha k} \\
+ \log{\alpha n \choose (1-\alpha)k} \ge \log{n \choose k}.
\]

By re-arranging, we see that the following is equivalent by applying linearity of expectation
and using Stirling's approximation such that ${n \choose k} \le (en/k)^k$.
\begin{align*}
\E[s] &\ge \log\left(\frac{{n \choose k}}{{(1-\alpha)n \choose \alpha k}{\alpha n \choose (1-\alpha)k}}\right) - 2\log k - \log\log (en/k)\\
&\ge \log\left(\left(\frac{1 - \alpha}{\alpha}\right)^{(1-2\alpha)k}\right) - 2 \log k - \log\log(en/k)\\
&\ge (1-2\alpha)k \cdot \log\left(\frac{1 - \alpha}{\alpha}\right) - 2 \log k - \log\log(en/k).
\end{align*}
Therefore, we get our desired lower bound.
\end{proof}

To sanity check, we can consider various choices of $\alpha$. For example, if we set $\alpha = 1/2$, we note
that the space lower bound becomes trivially $0$. In fact, this makes sense as there are simple algorithms
to obtain $\alpha = 1/2$ that require essentially no space. For example, we can use any random hash function $h$ that outputs random bits
and return positive only when $h(x) = 0$. This obtains $\alpha = 1/2$ and essentially ignores the input set. Therefore, we can see that our lower bound is sensible.

Finally, we can use the above lemma combined with Theorem~\ref{thm:error_lb} to obtain 
our space lower bound for differentially private mechanisms that already require error probability.

\begin{proof}[Proof of Theorem~\ref{thm:space_lb}]
First, we apply Theorem~\ref{thm:error_lb} to get that
the error probability $\alpha$ must satisfy
\[
\alpha \ge \frac{1 - \delta}{e^\epsilon + 1}.
\]
Note, for all choices $\epsilon \ge 0$ and $\delta \ge 0$, we see that $0 < \alpha \le 1/2$.
Plugging in the error probability $\alpha \ge (1-\delta)/(e^\epsilon + 1)$ into Lemma~\ref{lem:space_lb}, we get the following
\[
\E[s] \ge \frac{e^\epsilon - 1 + 2\delta}{e^\epsilon + 1} \cdot k \cdot \log \left(\frac{1}{\alpha} - 1 \right)- O(\log k + \log\log n).
\]
First, we note that $(e^{\epsilon} - 1)/(e^\epsilon + 1) = \Theta(1)$ for all choices of $\epsilon \ge 0$. For sufficiently large $k = \Omega(\log\log n)$, we get that
\[
\E[s] = \Omega\left(\left(1 + \frac{\delta}{e^\epsilon}\right) \cdot k  \cdot \log(1/\alpha) \right)
\]
completing the proof.
\end{proof}

\section{Space optimization}
\label{app:space_opt}

In our construction, we set the encoding size $m = (1+\beta)k$ in Section~\ref{sec:cons}.
This is a worst-case guarantee to ensure that $\solve$ may always be executed with the condition that the number of columns $m$ satisfies $m \ge (1+\beta)n$ where $n$ is the number of rows (i.e., sampled set size $S')$.
Instead, we can pick $m$ closer to the expected size of $S'$ and fail if it goes over.

For example, we can apply known probability tail bounds (such as Chernoff bounds) and pick $m$ to be closer to $(1-p)k$ that is the expected size of $S'$.
This increases the failure probability of $\solve$ (and $\delta)$ by an additive $e^{-O(k)}$ to account for if the number of rows is too large.
Let $0 < \gamma \leq 1$ be a fixed constant. Invoking Chernoff bound, we see that the probability that $|S'| \geq (1 + \gamma)(1-p)k$ is bounded above by $e^{-O(k)}$. Suppose that we assume that $|S'| \leq k' =  (1 + \gamma)\frac{1 - 2\alpha}{1 - \alpha}|S|$ and choose $m = (1+\beta)k'$.
This increases the failure probability of $\solve$ by an additive $e^{-O(k)}$, which is very small.
As this optimization increases $\delta$, it cannot be used for pure differentially private schemes.

Unfortunately, this ends up being a theoretical improvement as we were unable to empirically observe space efficiency gains in natural settings.
\section{Trivial algorithm for large error Probability}
\label{app:naive}

If we consider the case of large error probabilities $\alpha \ge 1/2$, there are trivial algorithms
for differentially private subsets that use, essentially, no space and has perfect privacy guarantees of $\epsilon = \delta = 0$. In fact,
it suffices to simply consider
the case with error probability $\alpha = 1/2$.

Consider the following construction that completely
ignores the input subset $S$. Pick a random hash function $h: U \rightarrow \{0,1\}$. We represent
the input subset $S$ using $h$. For any element $u \in U$, the decoding algorithm returns $\bOne_{h(u) = 1}$. In other words, the decoding algorithms returns a uniformly random bit for each element $u \in U$. It is not hard to see that the error probability of this construction is exactly $1/2$. As the hash function $h$ is chosen independent of the input subset $S$, it is quite clear that this trivial algorithm achieves perfect privacy of $\epsilon = 0$ and $\delta = 0$. Therefore, all the constructions in our work focus
on the case when $\alpha \le 1/2$.

\begin{theorem}
There exists a perfectly secure $(0,0)$-DP set mechanism with
error probability $\alpha = 1/2$ where the encoding
consists of a single hash function independent
of the input set size.
\end{theorem}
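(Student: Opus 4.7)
The plan is to give a constructive proof by exhibiting a mechanism that completely ignores the input set $S$. Specifically, I will define $\Pi.\encode(S)$ to sample a uniformly random hash function $h : U \to \{0,1\}$ (independent of $S$) and output $\hat{S} = h$; the decoding algorithm is $\Pi.\decode(\hat{S}, u) = \bOne_{h(u) = 1}$. The encoding is clearly independent of the input set size since it consists only of the description of $h$.

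For privacy, the key observation is that because $\Pi.\encode$ never reads $S$, the induced output distribution is identical on every input. Thus for any neighboring sets $S, S'$ with $|S - S'| = 1$ and any measurable $R \subseteq \range(\Pi.\encode)$, we have $\Pr[\Pi.\encode(S) \in R] = \Pr[\Pi.\encode(S') \in R]$, which satisfies Definition~\ref{def:dp} with $\epsilon = 0$ and $\delta = 0$ exactly (no slack needed).

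For utility, I will verify the stronger joint form of Definition~\ref{def:utility}. Fix any input $S \subseteq U$ and any query set $Q \subseteq U$. Since $h$ is a fully random hash function, the values $\{h(q)\}_{q \in Q}$ are mutually independent uniform bits in $\{0,1\}$. For each $q \in Q$, the indicator $\bOne_{q \in S}$ is a fixed bit (determined by $S$), so $\Pr[\Pi.\decode(\hat{S}, q) \neq \bOne_{q \in S}] = \Pr[h(q) \neq \bOne_{q \in S}] = 1/2$. By independence across the coordinates, $\Pr[\forall q \in Q,\ \Pi.\decode(\hat{S}, q) \neq \bOne_{q \in S}] = (1/2)^{|Q|} = \alpha^{|Q|}$ with $\alpha = 1/2$, which matches the required bound.

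There is no substantive obstacle here: since the mechanism is oblivious to $S$, privacy is immediate, and utility reduces to the observation that a uniformly random bit matches a fixed target with probability exactly $1/2$, with independence across queries coming from the full randomness of $h$. The only care needed is to check the joint (product) form of the error definition rather than only the marginal statement, and this is handled cleanly by the independence of $h$'s outputs on distinct inputs.
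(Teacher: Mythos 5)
Your proposal is correct and takes essentially the same approach as the paper: both exhibit the set-oblivious mechanism that publishes a random hash function $h: U \to \{0,1\}$ and decodes via $\bOne_{h(u)=1}$, with privacy immediate from obliviousness and error probability $1/2$ from the uniform random bit. Your writeup is slightly more explicit than the paper's in verifying the product form of Definition~\ref{def:utility} via independence of $h$ on distinct queries, but this is the same argument.
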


\section{Experimental evaluation of encoding size}
\label{app:space}

We present graphs in Figure~\ref{fig:space} showing the encoding
sizes used in our experimental evaluation in Section~\ref{sec:exp}. Recall that, for the purposes of comparing utility, we chose parameters such that all three constructions have similar encoding sizes.
Therefore, the encoding sizes are essentially the same for all three constructions: $\sparseAlg$ from our work, ALP from~\cite{alp} and DP Count Sketch from~\cite{zhaoPrivateCountSketch}.

\begin{figure*}
\centering
\resizebox{1\textwidth}{120pt}{
\begin{tikzpicture}
\begin{axis}[
    title={Encoding Size (bytes): $k = 2^{12}$},
    xmin=0.5, xmax=6,
    ymin=1, ymax=16777216,
    xtick={0.5, 1, 1.5, 2, 2.5, 3, 3.5, 4, 4.5, 5, 5.5, 6},
    ytick={4, 16, 64, 256, 1024, 4096, 16384, 65536, 262144, 1048576, 4194304, 16777216},
    ymode=log,
       log basis y={2},
    legend pos=south east,
    ymajorgrids=true,
    grid style=dashed,
] 
\addplot[
    color=red,
    mark=square
    ]
    coordinates {
    (1.1,1075)(1.95,1612)(2.71,2150)(3.43,2688)(4.14,3225)(4.84,3763)(5.54,4300)
    };
    \addlegendentry{$\sparseAlg$ (Ours)}

\addplot[
    color=green,
    mark=star
    ]
    coordinates {
    (1.1,1075)(1.95,1612)(2.71,2150)(3.43,2688)(4.14,3225)(4.84,3763)(5.54,4300)
    };
    \addlegendentry{ALP}

\addplot[
    color=blue,
    mark=triangle
    ]
    coordinates {
    (1.1,1075)(1.95,1612)(2.71,2150)(3.43,2688)(4.14,3225)(4.84,3763)(5.54,4300)
    };
    \addlegendentry{DP Count Sketch}

\end{axis}
\end{tikzpicture}
\begin{tikzpicture}
\begin{axis}[
    title={Encoding Size (bytes): $k = 2^{16}$},
    xmin=0.5, xmax=6,
    ymin=1, ymax=16777216,
    xtick={0.5, 1, 1.5, 2, 2.5, 3, 3.5, 4, 4.5, 5, 5.5, 6},
    ytick={4, 16, 64, 256, 1024, 4096, 16384, 65536, 262144, 1048576, 4194304, 16777216},
    ymode=log,
       log basis y={2},
    legend pos=south east,
    ymajorgrids=true,
    grid style=dashed,
] 

\addplot[
    color=red,
    mark=square
    ]
    coordinates {
    (1.1,17203)(1.95,25804)(2.71,34406)(3.43,43008)(4.14,51609)(4.84,60211)(5.54,68812)
    };
    \addlegendentry{$\sparseAlg$ (Ours)}

\addplot[
    color=green,
    mark=star
    ]
    coordinates {
    (1.1,17203)(1.95,25804)(2.71,34406)(3.43,43008)(4.14,51609)(4.84,60211)(5.54,68812)
    };
    \addlegendentry{ALP}

\addplot[
    color=blue,
    mark=triangle
    ]
    coordinates {
    (1.1,17203)(1.95,25804)(2.71,34406)(3.43,43008)(4.14,51609)(4.84,60211)(5.54,68812)
    };
    \addlegendentry{DP Count Sketch}

\end{axis}
\end{tikzpicture}
\begin{tikzpicture}
\begin{axis}[
    title={Encoding Size (bytes): $k = 2^{20}$},
    xmin=0.5, xmax=6,
    ymin=1, ymax=16777216,
    xtick={0.5, 1, 1.5, 2, 2.5, 3, 3.5, 4, 4.5, 5, 5.5, 6},
    ytick={4, 16, 64, 256, 1024, 4096, 16384, 65536, 262144, 1048576, 4194304, 16777216},
    ymode=log,
       log basis y={2},
    legend pos=south east,
    ymajorgrids=true,
    grid style=dashed,
] 

\addplot[
    color=red,
    mark=square
    ]
    coordinates {
    (1.1,275251)(1.95,412876)(2.71,550502)(3.43,688128)(4.14,825753)(4.84,963379)(5.54,1101004)
    };
    \addlegendentry{$\sparseAlg$ (Ours)}

\addplot[
    color=green,
    mark=star
    ]
    coordinates {
    (1.1,275251)(1.95,412876)(2.71,550502)(3.43,688128)(4.14,825753)(4.84,963379)(5.54,1101004)
    };
    \addlegendentry{ALP}

\addplot[
    color=blue,
    mark=triangle
    ]
    coordinates {
    (1.1,275251)(1.95,412876)(2.71,550502)(3.43,688128)(4.14,825753)(4.84,963379)(5.54,1101004)
    };
    \addlegendentry{DP Count Sketch}

\end{axis}
\end{tikzpicture}
}

\caption{\label{fig:space}
Comparisons of of $\sparseAlg$, ALP, and DP Count Sketch with $\delta \le 2^{-40}$. The $x$-axis is privacy parameter $\epsilon$ and the $y$-axis is encoding size in bytes.}
\end{figure*}
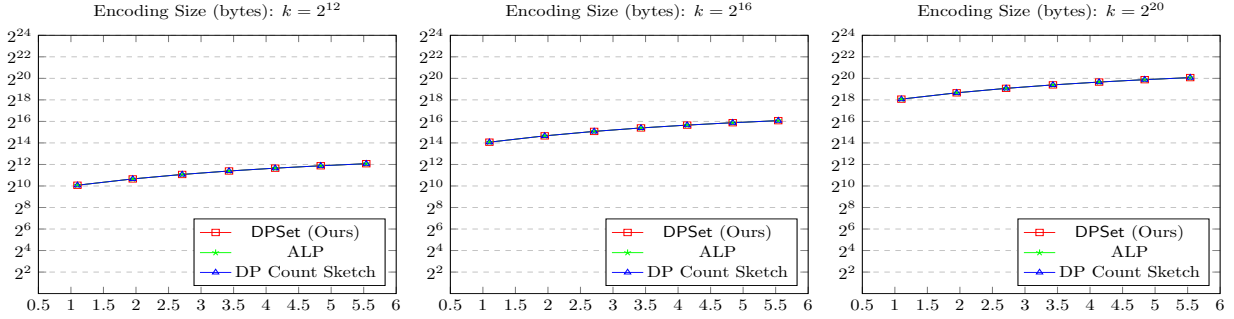

\end{document}